\newenvironment{proofSketch}{%
	\proof}{\endproof}
\newcommand{\ltlN}{\LTLnext}
\newcommand{\ltlG}{\LTLglobally}
\newcommand{\ltlF}{\LTLeventually}
\newcommand{\ltlU}{\LTLuntil}
\newcommand{\calT}{\mathcal{T}}
\newcommand{\calA}{\mathcal{A}}
\newcommand{\traces}[1]{\mathit{Traces}(#1)}
\newcommand{\traceVars}{\mathcal{V}}
\newcommand{\nat}{\mathbb{N}}
\newcommand{\filter}[2]{\llparenthesis #1\rrparenthesis_{#2}}
\newcommand{\predSat}[1]{\llbracket #1 \rrbracket}
\newcommand{\validRes}[3]{\mathit{validRes}^{#1,#2}_{#3}}
\newcommand{\gameE}[3]{\mathcal{G}^\forall_{(#1, #2, #3)}}
\newcommand{\gameEF}[3]{\mathcal{G}^{\forall\exists}_{(#1, #2, #3)}}
\newcommand{\STS}{\mathcal{T}}
\newcommand{\pred}{\mathcal{P}}
\newcommand{\ldot}{\mathpunct{.}}
\newcommand\fs@plainruled{\def\@fs@cfont{\rmfamily}\let\@fs@capt\floatc@plain
	\def\@fs@pre{\hrule height1pt depth0pt \kern5pt}%
	\def\@fs@post{}%
	\def\@fs@mid{\kern2pt\hrule height1pt depth0pt\relax\kern\abovecaptionskip}%
	\let\@fs@iftopcapt\iffalse}
\newif\iffullversion
\newcommand{\ifFull}[2]{\iffullversion#1\else#2\fi}
\begin{document}

\title{Software Verification of Hyperproperties \\Beyond $k$-Safety}
\titlerunning{Software Verification of Hyperproperties Beyond $k$-Safety}

\author{Raven Beutner\textsuperscript{(\Letter)} \!\!\! \scalebox{1.25}{\orcidlink{0000-0001-6234-5651}}  \and
Bernd Finkbeiner \!\!\! \scalebox{1.25}{\orcidlink{0000-0002-4280-8441}} }
\authorrunning{R.~Beutner and B.~Finkbeiner}

\institute{CISPA Helmholtz Center for Information Security,\\ Saarbrücken, Germany\\ \email{\{raven.beutner, finkbeiner\}@cispa.de}}

\maketitle              

\begin{abstract}
Temporal hyperproperties are system properties that relate multiple execution traces. 
For (finite-state) hardware, temporal hyperproperties are supported by model checking algorithms, and tools for general temporal logics like HyperLTL exist. 
For (infinite-state) software, the analysis of temporal hyperproperties has, so far, been limited to $k$-safety properties, i.e., properties that stipulate the absence of a bad interaction between any $k$ traces.
In this paper, we present an automated method for the verification of $\forall^k\exists^l$-safety properties in infinite-state systems.
A $\forall^k\exists^l$-safety property stipulates that for any $k$ traces, there \emph{exist} $l$ traces such that the resulting $k+l$ traces do not interact badly.
This combination of universal and existential quantification enables us to express many properties beyond $k$-safety, including, for example, generalized non-interference or program refinement.
Our method is based on a strategy-based instantiation of existential trace quantification combined with a program reduction, both in the context of a fixed predicate abstraction.
Notably, our framework allows for mutual dependence of strategy and reduction.

\keywords{Hyperproperties \and HyperLTL \and Infinite-state Systems \and Predicate Abstraction \and Hyperliveness \and Verification \and Program Reduction.}
\end{abstract}

\section{Introduction}

Hyperproperties are system properties that relate multiple execution traces of a system \cite{ClarksonS08} and commonly arise, e.g., in information-flow policies \cite{McCullough88}, the verification of code optimizations \cite{BarrettFGHPZ05}, and robustness of software \cite{ChaudhuriGL12}.
Consequently, many methods for the automated verification of hyperproperties have been developed \cite{ShemerGSV19,UnnoTK21,FarzanV19,SousaD16}.
Almost all previous approaches verify a class of hyperproperties called $k$-safety, i.e., properties that stipulate the absence of a bad interaction between any $k$ traces in the system.
For example, we can express a simple form of non-interference as a $2$-safety property by stating that any \emph{two} traces that agree on the low-security inputs should produce the same observable output.

The vast landscape of hyperproperties does, however, stretch far beyond $k$-safety.
The overarching limitation of $k$-safety (or, more generally, of hypersafety \cite{ClarksonS08}) is an implicit \emph{universal} quantification over all executions.
By contrast, many properties of interest, ranging from applications in information-flow control to robust cleanness, require a combination of universal and existential quantification.
For example, consider the reactive program in \Cref{fig:gniex}, where $\star_{\nat}$ denotes a nondeterministic choice of a natural number.
We assume that $h$, $l$, and $o$ are a high-security input, a low-security input, and a low-security output, respectively.
This program violates the simple $2$-safety non-interference property given above as the non-determinism influences the output. 
Nevertheless, the program is ``secure'' in the sense that an attacker that observes low-security inputs and outputs cannot deduce information about the high-security input.
To capture this formally, we use a relaxed notion of non-interference, in the literature often referred to as generalized non-interference (GNI) \cite{McCullough88}.
We can, informally, express GNI in a temporal logic as follows:
\begin{align*}
	\forall \pi. \forall \pi'. \exists \pi''\ldot \ltlG \big( o_\pi = o_{\pi''} \land l_\pi = l_{\pi''}  \land h_{\pi'} = h_{\pi''} \big)
\end{align*}
This property requires that for any two traces $\pi, \pi'$, there exists some trace $\pi''$ that, globally, agrees with the low-security inputs and outputs on $\pi$ but the high-security inputs on $\pi'$.
Phrased differently, any observation on the low-security input-output behavior is compatible with every possible high-security input. 
The program in \Cref{fig:gniex} satisfies GNI.
Crucially, GNI  is no longer a hypersafety property (and, in particular, no $k$-safety property for any $k$) as it requires a combination of universal and \emph{existential} quantification.

\setlength{\columnsep}{0pt}%
\setlength{\intextsep}{8pt}%
\begin{wrapfigure}{r}{0.25\textwidth}
    \vspace{-5mm}
    \small
    \begin{algorithmic}
        \RepeatFive
        \State \textbf{readInput}($h, l$)
        \IfOne{$h > l$}
        \State $o \leftarrow l + \star_{\nat}$
        \ElseThree
        \State $x \leftarrow \star_{\nat}$
        \IfOne{$x \geq l$}
        \State $o \leftarrow x$
        
        \ElseOne
        \State $o \leftarrow l$
    \end{algorithmic}
	\vspace{-1mm}
    \caption{}
    \label{fig:gniex}
\end{wrapfigure}

\subsection{Verification Beyond $k$-Safety}

Instead, GNI falls in the general class of $\forall^*\exists^*$-safety properties.
Concretely, a $\forall^k\exists^l$-safety property (using $k$ universal and $l$ existential quantifiers) stipulates that for any $k$ traces, there exist $l$ traces such that the resulting $k+l$ traces do not interact badly. 
$k$-safety properties are the \emph{special case} where $l = 0$.
We study the verification of such properties in infinite-state systems arising, e.g., in software.
In contrast to $k$-safety, where a broad range of methods has been developed \cite{ShemerGSV19,FarzanV19,SousaD16,Benton04,UnnoTK21}, no method for the automated verification of \emph{temporal} $\forall^*\exists^*$ properties in infinite-state systems exists (we discuss related approaches in \Cref{sec:relatedWork}).

Our novel verification method is based on a game-based reading of existential quantification \emph{combined} with the search for a program reduction.
The game-based reading of existential quantification instantiates existential trace quantification with an explicit strategy and constitutes the first practicable method for the verification of $\forall^*\exists^*$-properties in finite-state systems \cite{CoenenFST19}.
Program reductions are a well-established technique to align executions of independent program fragments (such as the individual program copies in a self-composition) to obtain proofs with easier invariants \cite{Lipton75,ShemerGSV19,FarzanV19}.

So far, both techniques are limited to their respective domain, i.e., the game-based approach has only been applied to finite-state systems and synchronous specifications, and reductions have (mostly) been used for the verification of $k$-safety.
We combine both techniques yielding an effective (and first) verification technique for hyperproperties beyond $k$-safety in infinite-state systems arising in software.
Notably, our search for reduction and strategy-based instantiation of existential quantification is \emph{mutually dependent}, i.e., a particular strategy might depend on a particular reduction and vice versa.

\subsection{Contributions and Structure}

The starting point of our work is a new temporal logic called \emph{Observation-based HyperLTL} (OHyperLTL for short).
Our logic extends the existing hyperlogic HyperLTL \cite{ClarksonFKMRS14} with capabilities to reason about asynchronous properties (i.e., properties where the individual traces are traversed at different speeds), and to specify properties using assertions from arbitrary background theories (to reason about the infinite domains encountered in software) (\Cref{sec:ohyperltl}).

To automatically verify $\forall^k\exists^l$ OHyperLTL properties, we combine program reductions with a strategy-based instantiation of existential quantification, both in the context of a fixed predicate abstraction.
To facilitate this combination, we first present a game-based approach that automates the search for a reduction.
Concretely, we construct an abstract game where a winning strategy for the verifier directly corresponds to a reduction with accompanying proof.
As a side product, our game-based interpretation simplifies the search for a reduction in a given predicate abstraction as, e.g., studied by Shemer et al.~\cite{ShemerGSV19} (\Cref{sec:ksafety}).

Our strategic (game-based) view on reductions allows us to combine them with a game-based instantiation of existential quantification. 
Here, we view the existentially quantified traces as being constructed by a strategy that, iteratively, reacts to the universally quantified traces. 
As we phrase both the search for a reduction and the search for existentially quantified traces as a game, we can frame the search for both as a combined abstract game.
We prove the soundness of our approach, i.e., a winning strategy for the verifier constitutes both a strategy for the existentially quantified traces and accompanying (mutually dependent) reduction.
Despite its finite nature, constructing the abstract game is expensive as it involves many SMT queries.
We propose an inner refinement loop that determines the winner of the game (without constructing it explicitly) by computing iterative approximations (\Cref{sec:beyondksafety}).

We have implemented our verification approach in a prototype tool called \texttt{HyPA} (short for \textbf{Hy}perproperty Verification with \textbf{P}redicate \textbf{A}bstraction)
and evaluate \texttt{HyPA} on $k$-safety properties (that can already be handled by existing methods) and on $\forall^*\exists^*$-safety benchmarks that cannot be handled by any existing tool (\Cref{sec:eval}).

\paragraph{Contributions.}
In short, our contributions include the following:

\begin{itemize}[leftmargin=*]
	\item We propose a temporal hyperlogic that can specify asynchronous hyperproperties in infinite-state systems;
	
	\item We propose a game-based interpretation of a reduction (improving and simplifying previous methods for $k$-safety \cite{ShemerGSV19});
	
	\item We combine a strategy-based instantiation of existentially quantified traces with the search for a reduction. This yields a flexible (and first) method for the verification of temporal $\forall^*\exists^*$ properties.
	We propose an iterative method to solve the abstract game that avoids an expensive explicit construction;
	
	\item We provide and evaluate a prototype implementation of our method.
\end{itemize}

\section{Overview: Reductions and Quantification as a Game}
\label{sec:overview}

Our verification approach hinges on the observation that we can express both a reduction and existential trace quantification as a game.
In this section, we provide an overview of our game-based interpretations.
We begin by outlining our game-based reading of a reduction (illustrating this in the simpler case of $k$-safety) in \Cref{sec:overview_ksafety} and then extend this to include a game-based interpretation of existential quantification in \Cref{sec:overview_beyond}.

\subsection{Reductions as a Game}
\label{sec:overview_ksafety}

Consider the two programs in \Cref{fig:univ} and the specification that both programs produce the same output (on initially identical values for $x$).
We can formalize this in our logic OHyperLTL (formally defined in \Cref{sec:ohyperltl}) as follows: 
\begin{align*}
\forall^{\texttt{P1}} \pi_1 :(\mathit{pc} = 2). \; \forall^{\texttt{P2}} \pi_2 : (\mathit{pc} = 2). \; (x_{\pi_1} = x_{\pi_2}) \rightarrow \ltlG  (x_{\pi_1} = x_{\pi_2})
\end{align*}
The property states that for all traces $\pi_1$ in \texttt{P1} and $\pi_2$ in \texttt{P2} the LTL specification $(x_{\pi_1} = x_{\pi_2}) \rightarrow \ltlG  (x_{\pi_1} = x_{\pi_2})$ holds (where $x_\pi$ refers to the value of $x$ on trace $\pi$).
Additionally, the observation formula $\mathit{pc} = 2$ marks the positions at which the LTL property is evaluated:
We only observe a trace at steps where $\mathit{pc} = 2$ (i.e., where the program counter is at the output position). 

\begin{figure}[t]
        \begin{subfigure}{0.25\textwidth}
            \scalebox{0.9}{
                \begin{minipage}{\textwidth}
                    \begin{algorithmic}[1]
                        \RepeatThree
                        \State \textbf{print$(x)$}
                        \State $y \leftarrow 2  x$
                        \WhileTwo{$y > 0$}
                        \State $y \leftarrow y - 1$
                        \State $x \leftarrow 2x$
                    \end{algorithmic}
                \end{minipage}
            }
            \subcaption{Program \texttt{P1}}\label{fig:p1}
        \end{subfigure}%
        \begin{subfigure}{0.25\textwidth}
            \scalebox{0.9}{
                \begin{minipage}{\textwidth}
                    \begin{algorithmic}[1]
                        \RepeatThree
                        \State \textbf{print$(x)$}
                        \State $y \leftarrow x$
                        \WhileTwo{$y > 0$}
                        \State $y \leftarrow y - 1$
                        \State $x \leftarrow 4x$
                        
                    \end{algorithmic}
                \end{minipage}
            }
            \subcaption{Program \texttt{P2}}\label{fig:p2}
        \end{subfigure}%
        \begin{subfigure}{0.5\textwidth}
            \centering
            \scalebox{0.8}{
                \small
                \begin{tikzpicture}[scale=0.9]
                    
                    \node[draw, rectangle, thick,align=center, rounded corners=3pt,label={[label distance=-0.5mm]west:{\scriptsize$\{1,2\}$}}] at (2.5,2) (n0) {$(2,2)$\\ $x_1 = x_2$};
                    
                    \node[draw, rectangle, thick,align=center, rounded corners=3pt,label={[label distance=-0.5mm]west:{\scriptsize$\{1,2\}$}}] at (2.5,0) (n1) {$(4,4)$\\ $x_1 = x_2$\\$y_1 = 2 y_2$};
                    
                    \node[draw, rectangle, thick,align=center, rounded corners=3pt,label={[label distance=-0.5mm]north:{\scriptsize$\{1,2\}$}}] at (5,0) (n2) {$(5,5)$\\ $x_1 = x_2$\\$y_1 = 2 y_2$};
                    
                    \node[draw, rectangle, thick,align=center, rounded corners=3pt,label={[label distance=-0.5mm]north:{\scriptsize$\{1,2\}$}}] at (7.5,0) (n3) {$(6,6)$\\ $x_1 = x_2$\\$y_1 = 2 y_2 + 1$};
                    
                    \node[draw, rectangle, thick,align=center, rounded corners=3pt,label={[label distance=-0.5mm]east:{\scriptsize$\{1,2\}$}}] at (5,2) (n4) {$(3,3)$\\ $x_1 = x_2$};
                    
                    \node[draw, rectangle, thick,align=center, rounded corners=3pt,label={[label distance=-0.5mm]west:{\scriptsize$\{1\}$}}] at (2.5,0-2) (n5) {$(6,4)$\\ $x_1 = 2x_2$\\$y_1 = 2 y_2$};
                    
                    \node[draw, rectangle, thick,align=center, rounded corners=3pt,label={[label distance=-0.5mm]north:{\scriptsize$\{1\}$}}] at (5,-2) (n6) {$(5,4)$\\ $x_1 = 2x_2$\\$y_1 = 2 y_2 + 1$};
                    
                    \node[draw, rectangle, thick,align=center, rounded corners=3pt,label={[label distance=-0.5mm]east:{\scriptsize$\{1\}$}}] at (7.5,-2) (n7) {$(4,4)$\\ $x_1 = 2x_2$\\$y_1 = 2 y_2 + 1$};

                    \draw[->, thick] (n1) --  (n2);
                    
                    \draw[->, thick] (n2) -- (n3);
                    
                    \draw[->, thick] (n3) -- (n7);
                    
                    \draw[->, thick] (n7) -- (n6);
                    
                    \draw[->, thick] (n6) -- (n5);
                    
                    \draw[->, thick] (n5) --(n1);

                    \draw[->, thick] (n0) --  (n4);
                    
                    \draw[->, thick] (n4) -- (n1);
                    
                    \draw[->, thick] (n1) -- (n0);

                    \draw[->, thick] (n0) + (-0.8, -0.8) -- (n0);
                \end{tikzpicture}
            }
            \subcaption{Winning strategy for the verifier.} \label{fig:winningStrat}
        \end{subfigure}

    \caption{Two output-equivalent programs \texttt{P1} and \texttt{P2} are depicted in \Cref{fig:p1,fig:p2}. In \Cref{fig:winningStrat} a possible winning strategy for the verifier is given.
        Each abstract state contains the value of the program counter of both copies (given as the pair at the top) and the predicates that hold in that state.  
        For sake of readability we omit the trace variables and write, e.g., $x_1$ for $x_{\pi_1}$.
        We mark the initial state with an incoming arrow.
        The outer label at each state gives the scheduling $M \subseteq \{1, 2\}$ chosen by the strategy in that state.\vspace{-0.3cm}}\label{fig:univ}
\end{figure}

The verification of our property involves reasoning about two copies of our system (in this case, one of \texttt{P1} and one of \texttt{P2}) on \emph{disjoint} state spaces.
Consequently, we can interleave the statements of both programs (between two observation points) without affecting the behavior of the individual copies.
We refer to each interleaving of both copies as a \emph{reduction}.
The choice of a reduction drastically influences the complexity of the needed invariants \cite{FarzanV19,ShemerGSV19,Lipton75}.
Given an initial abstraction of the system \cite{GrafS97,ShemerGSV19}, we aim to discover a suitable reduction \emph{automatically}.
Our first observation is that we can phrase the search for a reduction as a game as follows:
In each step, the verifier decides on a \emph{scheduling} (i.e., a non-empty subset $M \subseteq \{1, 2\}$) that indicates which of the copies should take a step (i.e., $i \in M$ iff copy $i$ should make a program step).
Afterward, the refuter can choose an abstract successor state compatible with that scheduling, after which the process repeats.
This naturally defines a finite-state two-player safety game that we can solve efficiently.\footnote{The LTL specification is translated to a symbolic safety automaton that moves alongside the game. For sake of readability, we omitted the automaton from the following discussion.}
If the verifier wins, a winning strategy directly corresponds to a reduction and accompanying inductive invariant for the safety property within the given abstraction.

For our example, we give (parts of) a possible winning strategy in \Cref{fig:winningStrat}.
In each abstract state, the strategy chooses a scheduling (written next to the state), and all abstract states compatible with that scheduling are listed as successors.
Note that whenever the program counter is $(2,2)$ (i.e., both programs are at their output position), it holds that $x_1 = x_2$ (as required).
The example strategy schedules in lock-step for the most part (by choosing $M = \{1, 2\}$) but lets \texttt{P1} take the inner loop \emph{twice}, thereby maintaining the linear invariants $x_1 = x_2$ and $y_1 = 2y_2$. 
In particular, the resulting reduction is property-based \cite{ShemerGSV19} as the scheduling is based on the current (abstract) state. 
Note that the program cannot be verified with only linear invariants in a sequential or parallel (lock-step) reduction.

\subsection{Beyond $k$-Safety: Quantification as a Game}
\label{sec:overview_beyond}

We build upon this game-based interpretation of a reduction to move beyond $k$-safety.
As a second example, consider the two programs \texttt{Q1} and \texttt{Q2} in \Cref{fig:exist}, where $\star_\tau$ denotes a nondeterministic choice of type $\tau \in \{\nat, \mathbb{B}\}$.
We wish to check that \texttt{Q1} refines \texttt{Q2}, i.e., all output behavior of \texttt{Q1} is also possible in \texttt{Q2}.
We can express this in our logic as follows:
\begin{align*}
\forall^{\texttt{Q1}} \pi_1 : (\mathit{pc} = 2). \; \exists^{\texttt{Q2}} \pi_2 : (\mathit{pc} = 2). \; \ltlG (a_{\pi_1} = a_{\pi_2})
\end{align*}%
The property states that for every trace $\pi_1$ in \texttt{Q1} there \emph{exists} a trace $\pi_2$ in \texttt{Q2} that outputs the same value. 
The quantifiers range over infinite traces of variable assignments (with infinite domains), making a direct verification of the quantifier alternation challenging.
In contrast to alternation-free formulas, we cannot reduce the verification to verification on a self composition \cite{BartheDR11,FinkbeinerRS15}.
Instead, we adopt (yet another) game-based interpretation by viewing the existentially quantified traces as being resolved by a \emph{strategy} (called the witness strategy) \cite{CoenenFST19}.
That is, instead of trying to find a witness traces $\pi_2$ in \texttt{Q2} when given the \emph{entire} trace $\pi_1$, we interpret the $\forall\exists$ property as a game between verifier and refuter. 
The refuter moves through the state space of \texttt{Q1} (thereby producing a trace $\pi_1$), and the verifier reacts to each move by choosing a successor in the state space of \texttt{Q2} (thereby producing a trace $\pi_2$).
If the verifier can assure that the resulting  traces $\pi_1, \pi_2$ satisfy $ \ltlG (a_{\pi_1} = a_{\pi_2})$, the $\forall\exists$ property holds. 
However, this game-based interpretation fails in many instances. 
There might exist a witness trace $\pi_2$, but the trace cannot be produced by a witness strategy as it requires knowledge of \emph{future} moves of the refuter. 
Let us discuss this on the example programs in \Cref{fig:exist}.
A simple (informal) solution to construct a witness trace $\pi_2$ (when given the entire $\pi_1$) would be to guarantee that in \texttt{Q2}:4 (meaning location 4 of \texttt{Q2}) and line \texttt{Q1}:6 the value of $x$ in both programs agrees (i.e., $x_1 = x_2$ holds) and then simply resolve the nondeterminism at \texttt{Q2}:6 with $0$.
However, to follow this idea, the witness strategy for the verifier, when at \texttt{Q2}:3, would need to know the future value of $x_1$ when \texttt{Q1} is at location \texttt{Q1}:6. 

\begin{figure}[t]   
    \begin{subfigure}{0.25\textwidth}
        \scalebox{0.9}{
            \begin{minipage}{\textwidth}
                \begin{algorithmic}[1]
                    \RepeatFive
                    \State \textbf{print($a$)}
                    \State $x \leftarrow \star_\nat$
                    \WhileOne{$\star_\mathbb{B}$}
                    \State $x \leftarrow x + 1$
                    
                    \State $y \leftarrow x$
                    \WhileTwo{$y > 0$}
                    \State $a \leftarrow a + x$
                    \State $y \leftarrow y - 1$
                \end{algorithmic}
            \end{minipage}
        }
        \subcaption{Program \texttt{Q1}}\label{fig:p1'}
    \end{subfigure}%
    \begin{subfigure}{0.25\textwidth}
        \scalebox{0.9}{
            \begin{minipage}{\textwidth}
                \begin{algorithmic}[1]
                    \RepeatFour
                    \State \textbf{print($a$)}
                    \State $x \leftarrow \star_\nat$
                    \State $y \leftarrow x$
                    \WhileTwo{$y > 0$}
                    \State $a \leftarrow a +  x + \star_\nat$
                    \State $y \leftarrow y - 1$
                \end{algorithmic}
            \end{minipage}
        }
        \subcaption{Program \texttt{Q2}}\label{fig:p2'}
    \end{subfigure}%
    \begin{subfigure}{0.5\textwidth}
        \scalebox{0.85}{
            \small
            \centering
            \begin{tikzpicture}[scale=0.9]
                \node[draw, rectangle, thick,align=center, rounded corners=3pt, fill=white,label=-45:{\scriptsize$ \{1\}$}] at (0,0) (n0) {$\boldsymbol{\alpha_1:}(5,3)$\\ $a_1 = a_2$};
                
                \node[draw, rectangle, thick,align=center, rounded corners=3pt, fill=white,label=south:{\scriptsize$\{1\}$}] at (2.5,0) (n1) {$\boldsymbol{\alpha_2:}(3,3)$\\ $a_1 = a_2$};
                
                \node[draw, rectangle, thick,align=center, rounded corners=3pt, fill=white,label=south:{\scriptsize$\{1, 2\}$}] at (5,0) (n2) {$\boldsymbol{\alpha_3:}(2,2)$\\ $a_1 = a_2$};
                
                \node[draw, rectangle, thick,align=center, rounded corners=3pt, fill=white,label={[xshift=-0.4cm]south:{\scriptsize$\{1\}$}}] at (0,-2.5) (n3) {$\boldsymbol{\alpha_4:}(4,3)$\\ $a_1 = a_2$};

                \node[draw, rectangle, thick,align=center, rounded corners=3pt, fill=white,label=north:{\scriptsize$\{1, 2\}$}] at (2.5,-2.5) (n4) {$\boldsymbol{\alpha_5:}(7,5)$\\ $a_1 = a_2$\\$x_1 = x_2$\\$y_1 = y_2$};
                
                \node[draw, rectangle, thick,align=center, rounded corners=3pt, fill=white,label={[align=center]north:{\scriptsize$\{1,2\}$}}] at (5,-2.5) (n5) {$\boldsymbol{\alpha_6:}(9,7)$\\ $a_1 = a_2$\\$x_1 = x_2$\\$y_1 = y_2$};
                
                \node[draw, rectangle, thick,align=center, rounded corners=3pt, fill=white,label={[xshift=0.7cm,yshift=-0.8mm]north:{\scriptsize$\{2\}, \{\alpha_8\}$}}] at (0,-5) (n6) {$\boldsymbol{\alpha_7:}(6,3)$\\ $a_1 = a_2$};
                
                \node[draw, rectangle, thick,align=center, rounded corners=3pt, fill=white,label={[xshift=-0.4cm]north:{\scriptsize$\{1,2\}$}}] at (2.5, -5) (n7) {$\boldsymbol{\alpha_8:}(6,4)$\\ $a_1 = a_2$\\$x_1 = x_2$};
                
                \node[draw, rectangle, thick,align=center, rounded corners=3pt, fill=white,label={[xshift=0.75cm,yshift=-0.8mm]north:{\scriptsize$\{1,2\}$, $\{\alpha_6\}$}}] at (5,-5) (n8) {$\boldsymbol{\alpha_9:}(8,6)$\\ $a_1 = a_2$\\$x_1 = x_2$\\$y_1 = y_2$};
                
                \draw[->, thick] (n2) + (1.3, 0) -- (n2);
                
                \draw[->, thick] (n2) -- (n1);
                
                \draw[->, thick] (n1) -- (n3);

                \draw[->, thick,transform canvas={xshift=0.9ex}] (n0) -- (n3);
                \draw[->, thick,transform canvas={xshift=-0.9ex}] (n3) -- (n0);
                
                \draw[->, thick] (n3) -- (n6);
                
                \draw[->, thick] (n6) -- (n7);

                 \draw[->, thick] (n7) -- (n4);
                 
                  \draw[->, thick] (n4) -- (n2);
                
                \draw[->, thick] (n8) -- (n5);
                
                \draw[->, thick] (n4) -- (n8);

                \draw[->, thick] (n5) -- (n4);
             
            \end{tikzpicture}
        }
        
        \subcaption{Winning strategy for the verifier.} \label{fig:winningStratEx}
    \end{subfigure}
    
    \caption{Two programs \texttt{Q1} and \texttt{Q2} are given in \Cref{fig:p1',fig:p2'}. In \Cref{fig:winningStratEx} a possible winning strategy for the verifier is depicted.  
        The outer label gives the scheduling $M \subseteq \{1, 2\}$ and, if applicable, the restriction chosen by the witness strategy. \vspace{-0.3cm}}\label{fig:exist}
\end{figure}

Our insight in this paper is that we can turn the strategy-based interpretation of the witness trace $\pi_2$ into a useful verification method by \emph{combining} it with a program reduction. 
As we express both searches strategically, we can phrase the combined search as a combined game. 
In particular, both the reduction and the witness strategy are controlled by the verifier and can thus \emph{collaborate}.
In the resulting game, the verifier chooses a scheduling (as in \Cref{sec:overview_ksafety}) and, additionally, whenever the existentially quantified copy is scheduled, the verifier also decides on the successor state of that copy. 
We depict a possible winning strategy in \Cref{fig:winningStratEx}.
This strategy formalizes the interplay of reduction and witness strategy.
Initially, the verifier only schedules $\{1\}$ until \texttt{Q1} has reached program location \texttt{Q1}:6 (at which point the value of $x$ is fixed).
Only then does the verifier schedule $\{2\}$, at which point the witness strategy can decide on a successor state for \texttt{Q2}.
In our case, the strategy chooses a value for $x$ such that $x_1 = x_2$ holds. 
As we work in an abstraction of the actual system, we formalize this by restricting the abstract successor states.
In particular, in state $\alpha_7$ the verifier schedules $\{2\}$ and simultaneously restricts the successors to $\{\alpha_8\}$ (i.e., the abstract state where $x_1 = x_2$ holds), even though abstract state $[(6, 4), a_1 = a_2, x_1 \neq x_2]$ is also a valid successors under scheduling $\{2\}$.
We formalize when a restriction is valid in \Cref{sec:beyondksafety}.
The resulting strategy is winning and therefore denotes both a reduction \emph{and} witness strategy for the existentially quantified copy.
Importantly, both reduction and witness strategy are mutually dependent. 
Our tool \texttt{HyPA} is able to verify both properties (in \Cref{fig:univ} and \Cref{fig:exist}) in a matter of a few seconds (cf.~\Cref{sec:eval}).

\section{Preliminaries}
\label{sec:prelim}

We begin by introducing basic preliminaries, including our basic model of computation and background on (finite-state) safety games.

\paragraph{Symbolic Transition Systems.}

We assume some fixed underlying first-order theory.
A \emph{symbolic transition system} (STS) is a tuple $\mathcal{T} = (X, \mathit{init}, \mathit{step})$ where $X$ is a finite set of variables (possibly sorted), $\mathit{init}$ is a formula over $X$ describing all initial states, and $\mathit{step}$ is a formula over $X \uplus X'$ (where $X' \coloneqq \{x' \mid x \in X\}$ is the set of primed variables) describing the transitions of the system.
A concrete state $\mu$ in $\mathcal{T}$ is an assignment to the variables in $X$.
We write $\mu'$ for the assignment over $X'$ given by $\mu'(x') \coloneqq  \mu(x)$.
A trace in $\mathcal{T}$ is an infinite sequence of assignment $\mu_0\mu_1\cdots$ such that $\mu_0 \models \mathit{init}$ and for every $i \in \nat$, $\mu_i \uplus \mu_{i+1}' \models \mathit{step}$.
We write $\traces{\STS}$ for the set of all traces in $\STS$.
We can naturally interpret programs as STS by making the program counter explicit. 

\paragraph{Formula Transformations.}

For the remainder of this paper, we fix the set of system variables $X$. 
We also fix a finite set of trace variables $\traceVars = \{\pi_1, \ldots, \pi_k\}$.
For a trace variable $\pi \in \traceVars$ we define $X_\pi \coloneqq  \{x_\pi \mid x \in X\}$ and write $\vec{X}$ for $X_{\pi_1} \cup \cdots \cup X_{\pi_k}$.
For a formula $\theta$ over $X$, we define $\theta_{\langle \pi \rangle}$ as the formula over $X_\pi$ obtained by replacing every variable $x$ with $x_\pi$.
Similarly, we define $k$ fresh disjoint copies $\vec{X}' = X'_{\pi_1} \cup \cdots \cup X'_{\pi_k}$ (where $X'_\pi \coloneqq  \{x'_\pi \mid x \in X\}$).
For a formula $\theta$ over $\vec{X}$, we define $\theta^{\langle'\rangle}$ as the formula over $\vec{X}'$ obtained by replacing every variable $x_\pi$ with $x_\pi'$.

\newcommand{\safe}{\texttt{SAFE}}
\newcommand{\reach}{\texttt{REACH}}

\paragraph{Safety Games.}

A \emph{safety game} is a tuple $\mathcal{G} = (S_\safe, S_\reach, S_0, T, B)$ where $S = S_\safe \uplus s_\reach$ is a set of game states, $S_0 \subseteq S$ a set of initial states, $T \subseteq S \times S$ a transition relation, and $B \subseteq S$ a set of bad states.
We assume that for every $s \in S$ there exists at least one $s'$ with $(s, s') \in T$.
States in $S_\safe$ are controlled by player $\safe$ and those in $S_\reach$ by player $\reach$.
A play is an infinite sequence of states $s_0s_1\cdots$ such that $s_0 \in  S_0$, and $(s_i, s_{i+1}) \in T$ for every $i \in \nat$.
A positional strategy $\sigma$ for player $p \in \{\safe, \reach\}$ is a function $\sigma : S_p \to S$ such that $(s, \sigma(s)) \in T$ for every $s \in S_p$.
A play $s_0s_1\cdots$ is compatible with strategy $\sigma$ for player $p$ if $s_{i+1} = \sigma(s_i)$ whenever $s_i \in S_p$.
The safety player wins $\mathcal{G}$ if there is a strategy $\sigma$ for $\safe$ such that all $\sigma$-compatible plays never visit a state in $B$.
In particular, $\safe$ needs to win from \emph{all} initial states.

\section{Observation-based HyperLTL}
\label{sec:ohyperltl}

In this section, we present OHyperLTL (short for observation-based HyperLTL). 
Our logic builds upon HyperLTL \cite{ClarksonFKMRS14}, which itself extends linear-time temporal logic (LTL) with explicit trace quantification.
In OHyperLTL, we include predicates from the background theory (to reason about infinite variable domains) and explicit observations (to express asynchronous properties).
Formulas in OHyperLTL are given by the following grammar:\footnote{For the examples in \Cref{sec:overview}, we additionally annotated quantifiers with an STS if we want to reason about different STSs within the same formula. 
	In the following, we assume that all quantifiers range over traces in the same STS to simplify notation.}
\begin{align*}
\varphi &\coloneqq  \forall \pi : \xi\ldot \varphi \mid \exists \pi : \xi\ldot \varphi \mid \phi \\
\phi &\coloneqq  \theta \mid  \neg \phi \mid \phi_1 \land \phi_2 \mid \ltlN \phi \mid\phi_1 \ltlU \phi_2
\end{align*}%
Here $\pi \in \traceVars$ is a trace variable, $\theta$ is a formula over $\vec{X}$, and $\xi$ is a formula over $X$ (called the observation formula).
For ease of notation, we assume that all variables in $\traceVars$ occur in the quantifier prefix \emph{exactly} once. 
We use the standard Boolean connectives $\wedge$, $\rightarrow$, $\leftrightarrow$, and constants $\top, \bot$, as well as the derived LTL operators eventually $\ltlF \phi \coloneqq \top \ltlU \phi $, and globally $\ltlG \phi \coloneqq \neg \ltlF \neg \phi$.

\newcommand{\filterdef}[2]{\mathit{valid}(#1,#2)}
\newcommand{\traceSet}{\mathbb{T}}

\paragraph{Semantics.}

A trace $t$ is an infinite sequence $\mu_0\mu_1\cdots$ of assignments to $X$.
For $i \in \nat$, we write $t(i)$ to denote the $i$th value in $t$.
A trace assignment $\Pi$ is a partial mapping of trace variables in $\traceVars$ to traces. 
Given a trace assignment $\Pi$ and $i \in \nat$, we define $\Pi(i)$ to be the assignment to $\vec{X}$ given by $\Pi(i)(x_\pi) \coloneqq  \Pi(\pi)(i)(x)$, i..e, the value of $x_\pi$ is the value of $x$ on the trace assigned to $\pi$.
For the LTL body of an OHyperLTL formula, we define:
\begin{align*}
	\Pi, i &\models  \theta &\text{ iff } \quad &\Pi(i) \models \theta \\
	\Pi, i &\models  \neg \phi &\text{ iff } \quad & \Pi, i \not\models  \phi \\
	\Pi, i &\models \phi_1 \land \phi_2 &\text{ iff } \quad  &\Pi, i \models \phi_1 \text{ and }  \Pi, i \models  \phi_2\\
	\Pi, i&\models  \ltlN  \phi &\text{ iff } \quad & \Pi , i + 1 \models \phi \\
	\Pi, i&\models  \phi_1 \ltlU \phi_2 &\text{ iff } \quad & \exists j \geq i\ldot \Pi, j\models  \phi_2 \text{ and } \forall i \leq k < j\ldot  \Pi, k \models  \phi_1
\end{align*}
The distinctive feature of OHyperLTL over HyperLTL are the explicit observations. 
Given an observation formula $\xi$ and trace $t$, we say that $\xi$ is a \emph{valid observation on} $t$ (written $\filterdef{t}{\xi}$) if there are infinitely many $i \in \nat$ such that $t(i) \models \xi$.
If $\filterdef{t}{\xi}$ holds, we write $\filter{t}{\xi}$ for the trace obtained by projecting on those positions $i$ where $t(i) \models \xi$, i.e., $\filter{t}{\xi}(i) \coloneqq  t(j)$ where $j$ is the $i$th index that satisfies $\xi$.
Given a set of traces $\traceSet$ we resolve trace quantification as follows:
\begin{align*}
	\Pi &\models_{\traceSet}  \phi &\text{ iff } \quad  &\Pi, 0 \models \phi \\
	\Pi &\models_{\traceSet}  \forall \pi : \xi\ldot \varphi &\text{ iff } \quad &\forall t \in \{t \in \traceSet \mid \filterdef{t}{\xi}\}\ldot  \Pi[\pi \mapsto \filter{t}{\xi}] \models_{\traceSet}  \varphi \\
	\Pi &\models_{\traceSet}  \exists \pi : \xi\ldot \varphi &\text{ iff } \quad &\exists t \in \{t \in \traceSet \mid \filterdef{t}{\xi}\}\ldot \Pi[\pi \mapsto \filter{t}{\xi}] \models_{\traceSet}  \varphi
\end{align*}
The semantics mostly agrees with that of HyperLTL\cite{ClarksonFKMRS14} but projects each trace to the positions where the observation holds. 
Given an STS $\mathcal{T}$ and OHyperLTL formula $\varphi$, we write $\mathcal{T} \models \varphi$ if $\emptyset \models_{\mathit{Traces}(\mathcal{T})} \varphi$ where $\emptyset$ is the empty assignment. 

\paragraph{The Power of Observations.}

The explicit observations in OHyperLTL facilitate the specification of asynchronous hyperproperties, i.e., properties where traces are traversed at different speeds.  
For the example in \Cref{sec:overview_ksafety}, the explicit observations allow us to compare the output of both programs even though the actual step at which the output occurs (in a synchronous semantics) differs between both programs (as \texttt{P1} takes the inner loop twice as often as \texttt{P2}). 
As the observations are part of the specification, we can model a broad spectrum of properties ranging, e.g., from timing-insensitive properties (by placing observations only at output locations) to timing-sensitive specifications \cite{GeYCH18} (by placing observations at closer intervals).
Functional (opposed to temporal) $k$-safety properties specified by pre-and postcondition \cite{Benton04,ShemerGSV19,UnnoTK21} can easily be encoded as $\forall^k$-OHyperLTL properties by placing observations at the start and end of each program. 
By setting $\xi = \top$, i.e., observing \emph{every} step, we can express synchronous properties.
OHyperLTL thus subsumes HyperLTL.

\paragraph{Finite-State Model Checking.}

Many mechanisms used to express asynchronous hyperproperties render finite-state model checking undecidable \cite{GutsfeldMO21,BozzelliPS21,BaumeisterCBFS21}.
In contrast, the simple mechanism used in OHyperLTL maintains decidable finite-state model checking. 
Detailed proofs can be found in \ifFull{the appendix}{the full version \cite{full}}.

\begin{restatable}{theorem}{finiteStateR}
Assume an STS $\STS$ with finite variable domains and decidable background theory and an OHyperLTL formula $\varphi$. It is decidable if $\mathcal{T} \models \varphi$.
\label{theo:finiteState}
\end{restatable}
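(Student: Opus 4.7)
The plan is to reduce OHyperLTL model checking over a finite-domain STS to the classical automata-theoretic finite-state model checking procedure for HyperLTL, handling the observation filters via a Büchi construction. Since variable domains are finite, the set of concrete states $Q = \{\mu : X \to \text{Dom}\}$ is finite, and decidability of the background theory lets us effectively compute (i) the initial set $\{\mu \in Q \mid \mu \models \mathit{init}\}$, (ii) the transition relation $\{(\mu,\mu') \in Q \times Q \mid \mu \uplus \mu''{}' \models \mathit{step}\}$, (iii) for every observation formula $\xi$ the set $\predSat{\xi} \subseteq Q$ of states satisfying it, and (iv) for every LTL atom $\theta$ over $\vec{X}$ its truth on any $k$-tuple of states. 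So the STS is effectively a finite Kripke structure and all atoms in the OHyperLTL formula are decidable propositions over tuples of such states.

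Next I would handle the filter. For each trace variable $\pi$ with observation $\xi$, the set $\{\filter{t}{\xi} \mid t \in \traces{\STS},\, \filterdef{t}{\xi}\}$ of admissible filtered traces is $\omega$-regular: build a nondeterministic Büchi automaton $\calA_{\STS,\xi}$ over alphabet $Q$ that simulates a run of $\STS$ while nondeterministically ``emitting'' a letter exactly at positions where $\xi$ holds (so its output word is $\filter{t}{\xi}$) and using the positions satisfying $\xi$ as Büchi accepting states to enforce $\filterdef{t}{\xi}$. This replaces the asynchronous filter by a synchronous $\omega$-automaton, which is the crucial step that keeps the problem tractable: unlike general asynchronous HyperLTL extensions (whose undecidability results are cited in the paragraph just before the theorem), the filter depends only on the single trace it is applied to, so it is automaton-definable per trace variable independently.

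Then I would follow the standard HyperLTL model-checking pipeline of Finkbeiner, Rabe, and S\'anchez. Translate the LTL body $\phi$ into a Büchi automaton $\calA_\phi$ over alphabet $Q^k$, using the decidable atoms over $\vec{X}$ as the letter-level tests. Starting from $\calA_\phi$ and proceeding through the quantifier prefix from inside out: for an existential quantifier $\exists \pi_i : \xi_i$, project away the $i$-th component while composing with $\calA_{\STS,\xi_i}$ (intersection of $\omega$-regular languages plus projection); for a universal quantifier $\forall \pi_i : \xi_i$, complement, then do the same existential step, then complement again. Each operation is effective on Büchi automata. After processing the entire prefix we obtain a Büchi automaton whose emptiness (equivalently, non-emptiness of its negation) decides whether $\STS \models \varphi$.

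The main technical obstacle is keeping the complementation steps tractable and correct in the presence of the $\filterdef{t}{\xi}$ side condition, because complementing $\forall \pi : \xi$ must yield $\exists \pi$ ranging over traces where $\xi$ is a valid observation, not over all traces. I would address this by baking the Büchi acceptance of $\calA_{\STS,\xi}$ into the per-variable automaton so that the universe of traces over which a quantifier ranges is intrinsically restricted to those satisfying $\filterdef{\cdot}{\xi}$; since this set is itself $\omega$-regular, complementation remains within the $\omega$-regular world and the standard inductive construction goes through. Termination and decidability follow because every construction (product, projection, complementation, emptiness) is effective for Büchi automata and every atomic check reduces to a decidable query in the background theory.
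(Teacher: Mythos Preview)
Your proposal is correct and rests on the same key insight as the paper: because the observation filter is applied to each trace independently, the set of filtered traces $\{\filter{t}{\xi} \mid t \in \traces{\STS},\ \filterdef{t}{\xi}\}$ is $\omega$-regular over the finite state alphabet, so the problem collapses to standard HyperLTL-style automata reasoning.

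The packaging differs. The paper takes the shortest route: it builds, for each observation $\xi$, an explicit finite transition system $\STS_\xi$ whose states are the $\xi$-satisfying states of $\STS$ and whose edges contract finite $\xi$-free segments; then $\traces{\STS_\xi}$ is exactly the filtered language, and one simply invokes the known decidability of HyperLTL model checking on these per-quantifier systems as a black box. You instead keep the filtered language as a B\"uchi automaton (your $\calA_{\STS,\xi}$ is essentially the $\epsilon$-transition version of the paper's $\STS_\xi$, with the B\"uchi condition enforcing infinitely many $\xi$-positions) and then re-derive the HyperLTL algorithm by hand via projection and complementation. This is more work but equally valid; it also makes explicit how the $\filterdef{\cdot}{\xi}$ side condition is absorbed into the per-variable automaton so that complementation respects the restricted quantifier range, a point the paper leaves implicit in its black-box reduction. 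Conversely, the paper's reduction is cleaner and avoids re-proving closure properties already established for HyperLTL.
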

\begin{proofSketch}
	Under the assumptions, we can view $\STS$ as an explicit (instead of symbolic) finite-state transition system. 
	Given an observation formula $\xi$ we can effectively compute an explicit finite-state system $\STS'$ such that $\traces{\STS'} = \{ \filter{t}{\xi} \mid t \in \traces{\STS} \land \filterdef{t}{\xi}\}$.
	This reduces OHyperLTL model checking on $\STS$ to HyperLTL model checking on $\STS'$, which is decidable \cite{FinkbeinerRS15}.
	\qed
\end{proofSketch}

Note that for infinite-state (symbolic) systems, we cannot effectively compute $\STS'$ as in the proof of \Cref{theo:finiteState}.
In fact, there may not even exist a system $\STS'$ with the desired property that is expressible in the same background theory.

The finite-state result in \Cref{theo:finiteState} is of little relevance for the present paper.
Nevertheless, it indicates that our logic is well suited for verification of infinite-state (software) systems as the (inevitable) undecidability stems from the infinite domains in software programs and not already from the logic itself.

\paragraph{Safety.}

In this paper, we assume that the hyperproperty is temporally safe \cite{BeutnerCFHK22}, i.e., the temporal body of any OHyperLTL formula denotes a \emph{safety property}.
Note that, as we support quantifier alternation, we can still express hyperliveness properties \cite{ClarksonS08,CoenenFST19}.
For example, GNI is both temporally safe and hyperliveness.
We model the body of a formula by a symbolic safety automaton \cite{DAntoniV17}, which is a tuple $\calA = (Q, q_0, \delta, B)$ where $Q$ is a finite set of states, $q_0 \in Q$ the initial state, $B \subseteq Q$ a set of bad-states, and $\delta$ a finite set of automaton edges of the form $(q, \theta, q')$ where $q, q' \in Q$ are states and $\theta$ is a formula over $\vec{X}$.
Given a trace $t$ over assignments to $\vec{X}$, a run of $\calA$ on $t$ is an infinite sequence of states $q_0q_1\cdots$ (starting in $q_0$) such that for every $i$, there exists an edge $(q_i, \theta_i, q_{i+1}) \in \delta$ such that $t(i) \models \theta_i$.
A word is accepted by $\calA$ if it has \emph{no} run that visits a state in $B$.
The automaton is \emph{deterministic} if for every $q \in Q$ and every assignments $\mu$ to $\vec{X}$, there exists exactly one edge $(q, \theta, q') \in \delta$ with $\mu \models \theta$.

\section{Reductions as a Game}
\label{sec:ksafety}

After having defined our temporal logic, we turn our attention to the automatic verification of OHyperLTL formulas on STSs. 
In this section, we begin by formalizing our game-based interpretation of a reduction.
To illustrate this, we consider $\forall^k$ OHyperLTL formulas, which, as the body of the formula is a safety property, always denote $k$-safety properties. 

\paragraph{Predicate Abstraction.}

Our search for a reduction is based in the scope of a fixed predicate abstraction \cite{GrafS97,JhalaPR18}, i.e., we abstract our system by keeping track of the truth value of a few selected predicates that (ideally) identify properties that are relevant to prove the property in question.
Let $\mathcal{T} = (X, \mathit{init}, \mathit{step})$ be an STS and let $\varphi = \forall \pi_1 : \xi_1 \ldots \forall \pi_k : \xi_k\ldot \phi$ be the ($k$-safety) OHyperLTL we wish to verify.
Let $\calA_\phi = (Q_\phi, q_{\phi, 0}, \delta_\phi, B_\phi)$ be a deterministic safety automaton for $\phi$.
A \emph{relational} predicate $p$ is a formula over $\vec{X}$ that identifies a property of the combined state space of $k$ system copies.
Let $\mathcal{P} = \{p_1, \ldots, p_n\}$ be a finite set of relational predicates.
We say a formula over $\vec{X}$ is \emph{expressible in $\mathcal{P}$} if it is equivalent to a boolean combination of the predicates in $\mathcal{P}$.
We assume that all edge formulas in the automaton $\calA_\phi$, and formulas $\mathit{init}_{\langle \pi_i \rangle}$ and $(\xi_{i})_{\langle \pi_i \rangle}$ for $\pi_i \in \traceVars$ are expressible in $\mathcal{P}$.
Note that we can always add missing predicates to $\mathcal{P}$.

Given the set of predicates $\pred$, the state-space of the abstraction w.r.t.~$\mathcal{P}$ is given by $\mathbb{B}^n$, where for each abstract state $\hat{s} \in \mathbb{B}^n$, the $i$th position $\hat{s}[i] \in \mathbb{B}$ tracks whether or not predicate $p_i$ holds.
To simplify notation, we write $\mathit{ite}(b, \theta, \theta')$ to be formula $\theta$ if $b = \top$, and $\theta'$ otherwise. 
For each abstract state $\hat{s} \in \mathbb{B}^n$, we define $\predSat{\hat{s}} \coloneqq  \bigwedge_{i = 1}^n  \mathit{ite}\big(\hat{s}[i] , p_i, \neg p_i\big)$, i.e., $\predSat{\hat{s}}$ is a formula over $\vec{X}$ that captures all concrete states that are abstracted to $\hat{s}$.
To incorporate reductions in our abstraction, we parametrize the abstract transition relation by a \emph{scheduling} $M \subseteq \{\pi_1, \ldots, \pi_k\}$.
We lift the $\mathit{step}$ formula from $\STS$ by defining
\begin{align*}
	\mathit{step}_M \coloneqq   \bigwedge_{i = 1}^k \mathit{ite}\Big(\pi_i \in M, \mathit{step}_{\langle \pi_i \rangle}, \bigwedge_{x \in X} x_{\pi_i}' = x_{\pi_i}\Big).
\end{align*} 
That is all copies in $M$ take a step while all other copies remain unchanged.
Given two abstract states $\hat{s}_1, \hat{s}_2$ we say that $\hat{s}_2$ is an \emph{$M$-successor} of $\hat{s}_1$, written $\hat{s}_1 \xrightarrow{M} \hat{s}_2$, if $\predSat{\hat{s}_1} \land \predSat{\hat{s}_2}^{\langle'\rangle} \land \mathit{step}_M$ is satisfiable, i.e., we can transition from $\hat{s}_1$ to $\hat{s}_2$ by only progressing the copies in $M$.

For an abstract state $\hat{s}$, we define $\mathit{obs}(\hat{s}) \in \mathbb{B}^k$ as the boolean vector that indicates which copy (of $\pi_1, \ldots, \pi_k$) is currently at an observation point, i.e., $\mathit{obs}(\hat{s})[i] = \top$ iff $\predSat{\hat{s}} \land (\xi_{i})_{\langle \pi_i \rangle}$ is satisfiable.
Note that as $(\xi_{i})_{\langle \pi_i \rangle}$ is, by assumption, expressible in $\mathcal{P}$, either all or none of the concrete states in $\predSat{\hat{s}}$ satisfy $(\xi_{i})_{\langle \pi_i \rangle}$.

\paragraph{Game Construction.}

Building on the parametrized abstract transition relation, we can construct a (finite-state) safety game where winning strategies for the verifier correspond to valid reductions with accompanying proofs.
The nodes in our game have two forms:
Either they are of the form $(\hat{s}, q, b)$ where $\hat{s} \in \mathbb{B}^n$ is an abstract state, $q \in Q_\phi$ a state of the safety automaton, and $b \in \mathbb{B}^k$ a boolean vector indicating which copy has moved since the last automaton step;
Or of the form $(\hat{s}, q, b, M)$ where $\hat{s}$, $q$, and $b$ are as before and $\emptyset \neq M \subseteq  \{\pi_1, \ldots, \pi_k\}$ is a scheduling.
The initial states are all states $(\hat{s},q_{\phi, 0}, \top^k)$ where $\predSat{\hat{s}} \land  \bigwedge_{i=1}^k \mathit{init}_{\langle \pi_i \rangle}$ is satisfiable (recall that $\mathit{init}_{\langle \pi_i \rangle}$ is expressible in $\pred$).
We mark a state $(\hat{s}, q, b)$ or $(\hat{s}, q, b, M)$ as losing iff $q \in B_\phi$.
For automaton state $q \in Q_\phi$ and abstract state $\hat{s}$, we define $\delta_\phi(q, \hat{s})$ as the \emph{unique} state $q'$ such that there is an edge $(q, \theta, q') \in \delta_\phi$ such that $\predSat{\hat{s} }\land \theta$ is satisfiable.
Uniqueness follows from the assumption that $\calA_\phi$ is deterministic and all edge formulas are expressible in $\mathcal{P}$.
The transition relation of our game is given by the following rules:

\noindent
\begin{minipage}{0.5\textwidth}
	\def\defaultHypSeparation{\hskip .1in}
	\def\ScoreOverhang{0pt}
	\scalebox{1}{
		\begin{minipage}{\textwidth}
			\begin{prooftree}
				\AxiomC{$\forall \pi_i \in M\ldot \neg b[i]  \lor \neg \mathit{obs}(\hat{s})[i]$}
				\RightLabel{\footnotesize\textbf{(1)}}
				\UnaryInfC{$(\hat{s}, q, b) \rightsquigarrow (\hat{s}, q, b, M)$}
			\end{prooftree}
		\end{minipage}
	}
\end{minipage}%
\begin{minipage}{0.5\textwidth}
	\def\defaultHypSeparation{\hskip .1in}
	\def\ScoreOverhang{0pt}
	\scalebox{1}{
		\begin{minipage}{\textwidth}
			\begin{prooftree}
				\AxiomC{$\mathit{obs}(\hat{s}) = \top^k$}
				\AxiomC{$q' = \delta_\phi(q, \hat{s})$}
				\RightLabel{\footnotesize\textbf{(2)}}
				\BinaryInfC{$(\hat{s}, q, \top^k) \rightsquigarrow (\hat{s}, q', \bot^k)$}
			\end{prooftree}
		\end{minipage}
	}
\end{minipage}%

	{\def\defaultHypSeparation{\hskip .1in}
	\def\ScoreOverhang{0pt}
	\scalebox{1}{
		\begin{minipage}{\textwidth}
			\begin{prooftree}
				\vspace{2mm}
				\AxiomC{$\hat{s} \xrightarrow{M} \hat{s}'$}
				\AxiomC{$b' = b[i \mapsto \top]_{\pi_i \in M}$}
				\RightLabel{\footnotesize\textbf{(3)}}
				\BinaryInfC{$(\hat{s}, q, b, M) \rightsquigarrow (\hat{s}', q, b')$}
			\end{prooftree}
		\end{minipage}
	}}

\vspace{0mm}
\noindent
In rule \textbf{(1)}, we select any scheduling that schedules only copies that have not reached an observation point or have not moved since the last automaton step. 
In particular, we cannot schedule any copy that has moved and already reached an observation point.
In rule \textbf{(2)}, all copies reached an observation point and have moved since the last update (i.e., $b = \top^k$) so we progress the automaton and reset $b$.
Lastly, in rule \textbf{(3)}, we select an $M$-successor of $\hat{s}$ and update $b$ for all copies that take part in the step.
In our game, player $\safe$ takes the role of the verifier, and player $\reach$ that of the refuter. 
It is the safety player's responsibility to select a scheduling in each step, so we assign nodes of the form $(\hat{s}, q, b)$ to $\safe$.
Nodes of the form $(\hat{s}, q, b, M)$ are controlled by $\reach$ who can choose an abstract $M$-successor.
Let $\gameE{\STS}{\varphi}{\pred}$ be the resulting (finite-state) safety game.
A winning strategy for $\safe$ in $\gameE{\STS}{\varphi}{\pred}$ picks, in each abstract state, a valid scheduling that prevents a visit to a losing state.
We can thus show:

\begin{restatable}{theorem}{soundnessksafety}\label{theo:soundessKSafety}
If player $\safe$ wins $\gameE{\STS}{\varphi}{\pred}$, then $\STS \models \varphi$.
\end{restatable}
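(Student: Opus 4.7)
My plan is a forward-simulation argument. Fix a winning strategy $\sigma$ for $\safe$ and arbitrary traces $t_1, \ldots, t_k \in \traces{\STS}$ with $\filterdef{t_i}{\xi_i}$ for every $i$, and let $\Pi \coloneqq [\pi_i \mapsto \filter{t_i}{\xi_i}]_{i=1}^k$. It suffices to show that the (deterministic) run of $\calA_\phi$ on $\Pi$ never visits $B_\phi$, since this gives $\Pi \models \phi$, and so $\STS \models \varphi$ by arbitrariness of the traces.

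The first step is to construct a single infinite play of $\gameE{\STS}{\varphi}{\pred}$ that is compatible with $\sigma$ and ``tracks'' these traces. I maintain concrete counters $(j_1, \ldots, j_k)$ initialized to $(0, \ldots, 0)$, with the invariant that the abstract state $\hat{s}$ currently visited in the play is the abstraction of $(t_1(j_1), \ldots, t_k(j_k))$. The safety player's moves are chosen according to $\sigma$; at each refuter state $(\hat{s}, q, b, M)$, I let refuter pick the successor whose abstract state abstracts $(t_1(j_1'), \ldots, t_k(j_k'))$ with $j_i' = j_i + 1$ for $\pi_i \in M$ and $j_i' = j_i$ otherwise, updating the counters accordingly. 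This is a valid $M$-successor because the concrete $\mathit{step}$ relation holds between $t_i(j_i)$ and $t_i(j_i+1)$, and the initial game state is admissible because each $t_i(0) \models \mathit{init}$.

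The key obstacle, and hardest step, is to show that rule \textbf{(2)} fires infinitely often along this play; without this, $\sigma$ could avoid $B_\phi$ trivially without really covering the trace. Call a copy $i$ \emph{locked} when $b[i] = \top$ and $\mathit{obs}(\hat{s})[i] = \top$; rule \textbf{(1)} forbids scheduling locked copies, and a locked copy stays locked because its $j_i$ is frozen. Every $\safe$-move therefore schedules at least one unlocked copy. I would then track a potential $\Phi = \sum_i D_i$, where $D_i$ counts the schedulings copy $i$ still needs before becoming locked; each $D_i$ is finite because $\filterdef{t_i}{\xi_i}$ guarantees a next observation. Each scheduling step strictly decreases $\Phi$, so after finitely many $\safe$-moves $\Phi$ reaches $0$, all copies are locked, and rule \textbf{(2)} must fire. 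Moreover, between two consecutive firings each copy advances from its current observation position to \emph{exactly} the next one, since hitting an obs point with $b[i] = \top$ locks the copy and prevents overshoot.

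Finally, I match the game's automaton progression to $\calA_\phi$'s run on $\Pi$. Because every edge formula of $\calA_\phi$ is expressible in $\pred$, the edge taken in the game via $\delta_\phi(q, \hat{s})$ coincides with the edge $\calA_\phi$ takes on any concrete state abstracted by $\hat{s}$. By induction on the number $m$ of rule \textbf{(2)} firings, the game's automaton state after the $m$-th firing equals the state of $\calA_\phi$ after reading $\Pi(0), \ldots, \Pi(m-1)$. Since $\sigma$ is winning, these states never lie in $B_\phi$, and by the progress argument they cover every position of the run; hence $\calA_\phi$'s run on $\Pi$ avoids $B_\phi$, as required.
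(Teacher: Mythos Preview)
Your proposal is correct and follows essentially the same forward-simulation approach as the paper: both maintain per-copy position counters, use $\sigma$ to dictate schedulings, let the refuter pick the abstraction of the actual next concrete tuple (which is a valid $M$-successor because the $t_i$ are traces of $\STS$), and then match the game's automaton updates to the unique run of $\calA_\phi$ on the filtered traces via expressibility of the edge formulas in $\pred$. The one place you go beyond the paper is your potential argument showing that rule~\textbf{(2)} fires infinitely often; the paper's appendix proof simply asserts that the $j$th firing aligns with the $j$th observation index on each $t_i$ without justifying that the intermediate scheduling phase terminates, so your treatment is in fact the more complete one on this point.
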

\begin{proofSketch}
	Assume $\sigma$ is a winning strategy for $\safe$ in $\gameE{\STS}{\varphi}{\pred}$.
	Let $t_1, \ldots, t_k \in \traces{\STS}$ be arbitrary.
	We, iteratively, construct stuttered versions $t'_1, \ldots, t'_k$ of $t_1, \ldots, t_k$ by querying $\sigma$ on abstracted prefixes of $t_1, \ldots, t_k$:
	Whenever $\sigma$ schedules copy $i$ we take a proper step on $t_i$; otherwise we stutter.
	By construction of $\gameE{\STS}{\varphi}{\pred}$ the stuttered traces $t'_1, \ldots, t'_k$  align at observation points.
	In particular, we have $[\pi_1 \mapsto \filter{t_1}{\xi_1}, \ldots, \pi_k \mapsto \filter{t_k}{\xi_k}] \models \phi$ iff $[\pi_1 \mapsto \filter{t'_1}{\xi_1}, \ldots, \pi_k \mapsto \filter{t'_k}{\xi_k}] \models \phi$.
	Moreover, the sequence of abstract states in $\gameE{\STS}{\varphi}{\pred}$ forms an abstraction of $t'_1, \ldots, t'_k$ and shows that $\calA_\phi$ cannot reach a bad state when reading $\filter{t'_1}{\xi_1}, \ldots, \filter{t'_k}{\xi_k}$ (as $\sigma$ is winning).
	This already shows that $[\pi_1 \mapsto \filter{t'_1}{\xi_1}, \ldots, \pi_k \mapsto \filter{t'_k}{\xi_k}] \models \phi$ and thus $[\pi_1 \mapsto \filter{t_1}{\xi_1}, \ldots, \pi_k \mapsto \filter{t_k}{\xi_k}] \models \phi$.
	As this holds for all traces  $t_1, \ldots, t_k \in \traces{\STS}$, we get $\STS \models \varphi$ as required. 
	\qed
\end{proofSketch}

\paragraph{Game Construction and Complexity.}

If the background theory is decidable, $\gameE{\STS}{\varphi}{\pred}$  can be constructed effectively using at most $2^{|\mathcal{P}|+1} \cdot 2^k$ queries to an SMT solver.
Checking if $\safe$ wins $\gameE{\STS}{\varphi}{\pred}$ can be done with a simple fixpoint computation of the attractor in linear time.

Our game-based method of finding a reduction in a given abstraction is closely related to the notation of a \emph{property-directed self-composition} \cite{ShemerGSV19}.
The previously only known algorithm for finding such a reduction is based on an optimized enumeration \cite{ShemerGSV19}, which, in the worst case, requires $\mathcal{O}(2^{|\mathcal{P}|+1} \cdot 2^k)$ many enumerations.
Our worst-case complexity thus matches the bounds inferred by \cite{ShemerGSV19}, but avoids the explicit enumeration of reductions (and the concomitant repeated construction of the abstract state-space) and is, as we believe, conceptually simpler to comprehend.
Moreover, our game-based technique is the key stepping stone for extending our method beyond $k$-safety in \Cref{sec:beyondksafety}. 

\section{Verification Beyond $k$-Safety}
\label{sec:beyondksafety}

Building on the game-based interpretation of a reduction, we extend our verification beyond $\forall^*$ properties to support $\forall^*\exists^*$ properties.
We accomplish this by \emph{combining} the game-based reading of a reduction (as discussed in the previous section) with a game-based reading of existential quantification.
For the remainder of this section, fix an STS $\calT = (X, \mathit{init}, \mathit{step})$ and let 
\begin{align*}
	\varphi = \forall \pi_1 : {\xi_1} \ldots \forall \pi_l : {\xi_l}. \exists \pi_{l+1} : {\xi_{l+1}} \ldots \exists \pi_k : {\xi_{k}}\ldot \phi
\end{align*}
be the OHyperLTL formula we wish to check, i.e., we universally quantify over $l$ traces followed by an existential quantification over $k-l$ traces.
We assume that for every existential quantification $\exists \pi_i : \xi_i$ occurring in $\varphi$, $\filterdef{t}{\xi_i}$ holds for every $t \in \traces{\STS}$ (we discuss this later in \Cref{rem:assumption}).

\subsection{Existential Trace Quantification as a Game}

The idea of a game-based verification of $\forall^*\exists^*$ properties is to consider a $\forall^*\exists^*$-property as a game between verifier and refuter \cite{CoenenFST19}. 
The refuter controls the $l$ universally quantified traces by moving through $l$ copies of the system (thereby producing traces $\pi_1, \ldots, \pi_l$) and the verifier reacts by, incrementally, moving through $k-l$ copies of the system (thereby producing traces $\pi_{l+1}, \ldots, \pi_k$). 
If the verifier has a strategy that ensures that the resulting traces satisfy $\phi$, $\STS \models \varphi$ holds. 
We call such a strategy for the verifier a \emph{witness strategy}.

We combine this game-based reading of existential quantification with our game-based interpretation of a reduction by, additionally, letting the verifier control the scheduling of the system. 
When played on the \emph{concrete} state-space of $\STS$ the game proceeds in three stages as follows:
1) The verifier selects a valid scheduling $M \subseteq \{\pi_1, \ldots, \pi_k\}$;
2) The refuter selects successor states for all universally quantified copies by fixing an assignment to $X_{\pi_1}', \ldots, X_{\pi_l}'$ (only moving copies scheduled by $M$);
3)  The verifier reacts by choosing successor states for the existentially quantified copies by fixing an assignment to $X_{\pi_{l+1}}', \ldots, X_{\pi_k}'$ (again, only moving copies scheduled by $M$).
Afterward, the process repeats.

As we work within a fixed abstraction of $\STS$, the verifier can, however, not choose concrete successor states directly but only work in the precision captured by the abstraction.
Following the general scheme of abstract games, we, therefore, underapproximate the moves available to the verifier \cite{AlfaroGJ04}.
Formally, we abstract the three-stage game outlined before (which was played at the level of concrete states) to a simpler abstract game (consisting of only two stages).
In the first stage, the verifier selects both a scheduling $M$ and a \emph{restriction} on the set of abstract successor states, i.e., a set of abstract states $A$.
In the second stage, the refuter cannot choose any abstract successor state (any $M$-successor in the terminology from \Cref{sec:ksafety}), but only successors contained in the restriction $A$.
To guarantee the soundness of this approach, we ensure that the verifier can only pick restrictions that are \emph{valid}, i.e., restrictions that underapproximate the possibilities of the verifier on the level of concrete states.

\paragraph{Game Construction.}
We modify our game from \Cref{sec:ksafety} as follows.
States are either of the form $(\hat{s}, q, b)$ (as in \Cref{sec:ksafety}) or of the form $(\hat{s}, q, b, M, A)$ where $\hat{s}$, $q$, $b$, and $M$ are as in \Cref{sec:ksafety}, and $A \subseteq \mathbb{B}^n$ is a subset of abstract states (the restriction).
To reflect the restriction, we modify transition rules \textbf{(1)} and \textbf{(3)}. Rule \textbf{(2)} remains unchanged.

\noindent
\begin{minipage}{0.55\textwidth}
    \def\defaultHypSeparation{\hskip .2in}
    \def\ScoreOverhang{0pt}
    \scalebox{1}{
        \begin{minipage}{\textwidth}
             \begin{prooftree}
                \AxiomC{$\forall \pi_i \in M\ldot \neg b[i]  \lor \neg \mathit{obs}(\hat{s})[i]$}
                \AxiomC{$\validRes{\hat{s}}{M}{A}$}
                \RightLabel{\footnotesize\textbf{(1)}}
                \BinaryInfC{$(\hat{s}, q, b) \rightsquigarrow (\hat{s}, q, b, M, A)$}
            \end{prooftree}
        \end{minipage}
    }
\end{minipage}%
\begin{minipage}{0.45\textwidth}
    \def\defaultHypSeparation{\hskip .1in}
    \def\ScoreOverhang{0pt}
    \scalebox{1}{
        \begin{minipage}{\textwidth}
            \begin{prooftree}
                \AxiomC{$\hat{s}' \in A$}
                \AxiomC{$b' = b[i \mapsto \top]_{i \in M}$}
                \RightLabel{\footnotesize\textbf{(3)}}
                \BinaryInfC{$(\hat{s}, q, b, M, A) \rightsquigarrow (\hat{s}', q, b')$}
            \end{prooftree}
        \end{minipage}
    }
\end{minipage}

\vspace{2mm}

\noindent
In rule \textbf{(1)}, the safety player (who, again, takes the role of the verifier) selects both a scheduling $M$ and a restriction $A$ such that $\validRes{\hat{s}}{M}{A}$ holds (which we define later).
The reachability player (who takes the role of the refuter) can, in rule \textbf{(3)}, select any successor contained in $A$.

\paragraph{Valid Restriction.}

The above game construction depends on the definition of $\validRes{\hat{s}}{M}{A}$.
Intuitively, $A$ is a valid restriction if it underapproximates the possibilities of a witness strategy that can pick concrete successor states for all existentially quantified traces.
That is, for every concrete state in $\hat{s}$, a witness strategy (on the level of concrete states) can guarantee a move to a concrete state that is abstracted to an abstract state within $A$.
Formally we define $\validRes{\hat{s}}{M}{A}$ as follows:
\vspace{-1mm}
\begin{align*}
        &\forall \{X_{\pi_i}\}_{i=1}^{k}. \forall \{X_{\pi_i}'\}_{i=1}^l. \; \predSat{\hat{s}} \land \bigwedge_{i=1}^l \mathit{ite}\Big(\pi_i \in M, \mathit{step}_{\langle \pi_i \rangle}, \bigwedge_{x \in X} x_{\pi_i}' = x_{\pi_i}\Big)\\[-0.3cm]
        &\quad\Rightarrow \exists \{X_{\pi_i}'\}_{i=l+1}^k. \bigwedge_{i=l+1}^k \mathit{ite}\Big(\pi_i \in M, \mathit{step}_{\langle \pi_i \rangle}, \bigwedge_{x \in X} x_{\pi_i}' = x_{\pi_i}\Big) \land \bigvee\limits_{\hat{s}' \in A} \predSat{\hat{s}'}^{\langle'\rangle}
\end{align*}
It expresses that for all concrete states in $\predSat{\hat{s}}$ (assignments to $\{X_{\pi_i}\}_{i=1}^{k}$) and for all concrete successor states for the universally quantified copies (assignments to $\{X_{\pi_i}'\}_{i=1}^l$), there exist successor states for the existentially quantified copies ($\{X_{\pi_i}'\}_{i=l+1}^k$) such that one of the abstract states in $A$ is reached.

\begin{example}
	With this definition at hand, we can validate the restrictions chosen by the strategy in \Cref{fig:winningStratEx}.
	For example, in state $\alpha_7$ the strategy schedules $M = \{2\}$ and restricts the successor states to $\{\alpha_8\}$ even though abstract state $\big[(6, 4), a_1 = a_2, x_1 \neq x_2\big]$ is also a $\{2\}$-successor of $\alpha_7$.
	If we spell out $\validRes{\alpha_7}{\{2\}}{\{\alpha_8\}}$ we get\\[-2mm]
	\scalebox{0.9}{\parbox{\linewidth}{
	\begin{align*}
		&\forall X_1 \!\cup\! X_2 \!\cup\! X_1'\ldot \; \underbrace{a_1 = a_2}_{\predSat{\alpha_7}} \land  \Big(\!\bigwedge_{z \in X} \! z'_1 = z_1\Big) \Rightarrow \exists X_2'\ldot \; \underbrace{a_2' = a_2 \land y_2' = y_2}_{\mathit{step_{\langle 2 \rangle}}} \land \underbrace{\big(a_1' = a_2' \land x_1' = x_2'\big)}_{\predSat{\alpha_8}^{\langle'\rangle}}
	\end{align*}
}}\\[-2mm]
	where $X = \{a, x, y\}$.
	Here we assume that $\mathit{step} \coloneqq  \big(a' = a \land y' = y\big)$ is the update performed on instruction $x \leftarrow \star_\nat$ from \texttt{Q2}:3 to \texttt{Q2}:4. 
	The above formula is valid.
\end{example}

\paragraph{Correctness.}

Call the resulting game $\gameEF{\STS}{\varphi}{\pred}$.
The game combines the search for a reduction with that of a witness strategy (both within the precision captured by $\pred$).\footnote{In particular, $\gameEF{\STS}{\varphi}{\pred}$ (strictly) generalizes the construction of $\gameE{\STS}{\varphi}{\pred}$ from \Cref{sec:ksafety}:
	If $k = l$ (i.e, the property is a $\forall^*$-property) the unique minimal valid restriction from $\hat{s}, M$ is $\{\hat{s}' \mid \hat{s} \xrightarrow{M} \hat{s}'\}$, i.e., the set of all $M$-successors of $\hat{s}$.
	The safety player can thus not be more restrictive than allowing \emph{all} $M$-successors (as in $\gameE{\STS}{\varphi}{\pred}$).}
We can show:

\begin{restatable}{theorem}{soundessBeyond}
If player $\safe$ wins $\gameEF{\STS}{\varphi}{\pred}$, then $\mathcal{T} \models \varphi$.
\label{theo:soundnessBeyond}
\end{restatable}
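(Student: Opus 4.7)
The plan is to extend the proof sketch of \Cref{theo:soundessKSafety} by using the winning strategy $\sigma$ for $\safe$ in $\gameEF{\STS}{\varphi}{\pred}$ to simultaneously extract a reduction and concrete witness traces for the existentially quantified copies. Given arbitrary universal traces $t_1, \ldots, t_l \in \traces{\STS}$, I will construct stuttered versions $t'_1, \ldots, t'_l$ together with fresh traces $t'_{l+1}, \ldots, t'_k \in \traces{\STS}$ so that the joint abstraction of the concrete states at each step traces out a $\sigma$-compatible play in $\gameEF{\STS}{\varphi}{\pred}$.

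The construction proceeds inductively. At each stage I maintain concrete assignments $\mu_1, \ldots, \mu_k$ with $\mu_1 \uplus \cdots \uplus \mu_k \models \predSat{\hat{s}}$ for the current abstract game state $\hat{s}$, and with the prefixes of $t'_i$ for $i \leq l$ tracking $t_i$ up to stuttering. The strategy $\sigma$ prescribes a scheduling $M$ and a restriction $A$ with $\validRes{\hat{s}}{M}{A}$. For each universal copy $i \leq l$ scheduled by $M$, I advance $\mu_i$ along the next genuine transition of $t_i$; unscheduled copies stutter. This fixes an assignment to $\{X_{\pi_i}'\}_{i=1}^{l}$ consistent with $\predSat{\hat{s}}$ and the universal part of $\mathit{step}_M$. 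Invoking $\validRes{\hat{s}}{M}{A}$ then yields an assignment to $\{X_{\pi_i}'\}_{i=l+1}^k$ satisfying the existential part of $\mathit{step}_M$ and landing in some abstract state in $A$; a fixed choice function selects such successors. This extends each $t'_i$ for $i > l$ by a genuine $\mathit{step}_{\langle \pi_i \rangle}$ transition (or a stutter), so the constructed $t'_{l+1}, \ldots, t'_k$ are bona fide traces of $\STS$, with initial assignments guaranteed by the definition of initial game states. The hypothesis $\filterdef{t}{\xi_i}$ for every $t \in \traces{\STS}$ (stated just before \Cref{sec:beyondksafety}) ensures each existential trace has infinitely many observation points.

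Once the infinite play is built, the argument mirrors \Cref{theo:soundessKSafety}: rules \textbf{(1)} and \textbf{(2)} align observation points across copies, so the joint word read by $\calA_\phi$ along the play coincides with the filtered trace assignment $[\pi_i \mapsto \filter{t'_i}{\xi_i}]_{i=1}^k$, and stuttering preserves the filtered traces so that $\filter{t_i}{\xi_i} = \filter{t'_i}{\xi_i}$ for $i \leq l$. Since $\sigma$ is winning, no bad automaton state is visited, hence $\phi$ holds on this assignment, and $t'_{l+1}, \ldots, t'_k$ are valid existential witnesses for the given universal traces. As $t_1, \ldots, t_l$ were arbitrary, $\STS \models \varphi$. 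The principal obstacle is extracting the right use out of $\validRes{\hat{s}}{M}{A}$: its quantifier pattern---universal over the current concrete state and the universal-copy successors, existential over the existential-copy successors landing in $\bigvee_{\hat{s}' \in A} \predSat{\hat{s}'}^{\langle'\rangle}$---is precisely what permits the inductive extraction of concrete existential witnesses \emph{after} the refuter (equivalently, the given universal trace) commits to a move, and this is exactly why the abstract restrictions $A$ underapproximate, rather than overapproximate, the possibilities of a concrete-level witness strategy.
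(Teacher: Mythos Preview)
Your proposal is correct and follows essentially the same approach as the paper's proof: you simulate a $\sigma$-compatible play by abstracting the given universal traces, and at each step instantiate the universal quantifiers of $\validRes{\hat{s}}{M}{A}$ with the current concrete states and universal successors to extract concrete existential successors landing in $A$, then conclude as in \Cref{theo:soundessKSafety}. The paper presents this via an explicit loop construction (maintaining counters into the universal traces rather than speaking of stuttered copies) and additionally simplifies by assuming a unique initial concrete state, but the argument is the same.
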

\begin{proofSketch}
	Let $\sigma$ be a winning strategy for $\safe$ in $\gameEF{\STS}{\varphi}{\pred}$.
	Let $t_1, \ldots, t_l \in \traces{\STS}$ be arbitrary.
	We use $\sigma$ to incrementally construct witness traces $t_{l+1}, \ldots, t_k$ by querying $\sigma$.
	In every abstract state $\hat{s}$, $\sigma$ selects a scheduling $M$ and a restriction $A$ such that $\validRes{\hat{s}}{M}{A}$ holds.
	We plug the current \emph{concrete} state (reached in our construction of $t_{l+1}, \ldots, t_k$) into the universal quantification of $\validRes{\hat{s}}{M}{A}$ and get (concrete) witnesses for the existential quantification that, by definition of $\validRes{\hat{s}}{M}{A}$, are valid successors for the existentially quantified copies in $\STS$.
	\qed
\end{proofSketch}

\begin{remark}\label{rem:assumption}
	Recall that we assume that for every existential quantification $\exists \pi_i : \xi_i$ occurring in $\varphi$ and all $t \in \traces{\STS}$, $\filterdef{t}{\xi_i}$ holds.
	This is important to ensure that the safety player (the verifier) cannot avoid observation points forever. 
	We could drop this assumption by strengthening the winning condition in $\gameEF{\STS}{\varphi}{\pred}$ and explicitly state that, in order to win, $\safe$ needs to visit observations points on existentially quantified traces infinitely many times. 
\end{remark}

\paragraph{Clairvoyance vs.~Abstraction.}

The cooperation between reduction (the ability of the verifier to select schedulings) and witness strategy (the ability to select restrictions on the successor) can be seen as a limited form of prophecy \cite{AbadiL91,BeutnerF22}.
By first scheduling the universal copies, the witness strategy can peek at future moves before committing to a successor state, as we e.g., saw in \Cref{fig:exist}.
The ``theoretically optimal'' reduction is thus a sequential one that first schedules only the universally quantified traces (until an observation point is reached) and thereby provides maximal information for the witness strategy.
However, in the context of a fixed abstraction, this reduction is not always optimal. 
For example, in \Cref{fig:exist} the strategy schedules the loop in lock-step which is crucial for generating a proof with simple (linear) invariants.
In particular, \Cref{fig:exist} does not admit a witness strategy in the lock-step reduction and does not admit a proof with linear invariants in a sequential reduction.
Our verification framework, therefore, strikes a delicate balance between clairvoyance needed by the witness strategy and precision captured in the abstraction, further emphasizing why the searches for reduction and witness strategy need to be mutually dependent. 

\subsection{Constructing and Solving $\gameEF{\STS}{\varphi}{\pred}$}\label{sec:solvingInner}

\begin{wrapfigure}{R}{0.57\textwidth}
    \vspace{-0.5cm}
    \scalebox{1}{
        \begin{minipage}{0.55\textwidth}
            \begin{algorithm}[H]
                \caption{Iterative solver for $\gameEF{\STS}{\varphi}{\pred}$. 
            }\label{alg:cap}
                \begin{algorithmic}[1]
                    \State \textbf{Input: } $\mathcal{T}, \varphi, \mathcal{P}$                    
                    \State $\tilde{\mathcal{G}}\coloneqq  \mathit{initialApproximation}(\mathcal{T}, \varphi, \mathcal{P})$
                    \RepeatOne
                   
                    \MatchTwo{$\mathit{Solve}(\tilde{\mathcal{G}})$}
                    
                    \State \textbf{case }$\texttt{REACH}(\sigma)$\textbf{:} \textbf{return} $\texttt{REACH}$
                    \CaseTwo{$\texttt{SAFE}(\sigma)$}
                    
                    \ForAllOne{$(\hat{s}, M, A) \in \mathit{Restrictions}(\sigma)$}
                    \IfTwo{$\neg \validRes{\hat{s}}{M}{A}$}
                    \ForAllOne{$A' \subseteq A$ }
                    \State $\tilde{\mathcal{G}}\coloneqq  \mathit{Remove}(\tilde{\mathcal{G}}, (\hat{s}, M, A'))$
                    \State \textbf{goto } 4
                    
                    \State \textbf{return} $\texttt{SAFE}$
                \end{algorithmic}
            \end{algorithm}
        \end{minipage}
    }
    \vspace{-0.2cm}
    
\end{wrapfigure}

Constructing the game graph of $\gameEF{\STS}{\varphi}{\pred}$ requires the identification of all valid restrictions (of which there are exponentially many in the number of abstract states and thus double exponentially many in the number of predicates) each of which requires to solve a quantified SMT query.
We propose a more effective algorithm that solves $\gameEF{\STS}{\varphi}{\pred}$ without constructing it explicitly.
Instead, we iteratively refine an abstraction $\tilde{\mathcal{G}}$ of $\gameEF{\STS}{\varphi}{\pred}$.
Our method hinges on the following easy observation:

\begin{lemma}\label{lem:closed}
	For any $\hat{s}$ and $M$, $\{A \mid \validRes{\hat{s}}{M}{A}\}$ is upwards closed (w.r.t.~$\subseteq$).
\end{lemma}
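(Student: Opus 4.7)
The plan is to unfold the definition of $\validRes{\hat{s}}{M}{A}$ and observe that $A$ occurs in exactly one place, namely inside the disjunction $\bigvee_{\hat{s}' \in A} \predSat{\hat{s}'}^{\langle'\rangle}$ on the right-hand side of the outer implication. Enlarging $A$ can only add disjuncts, so it can only make this formula easier to satisfy.

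Concretely, I would fix $\hat{s}$, $M$, and take $A \subseteq A'$ with $\validRes{\hat{s}}{M}{A}$. I then want to show $\validRes{\hat{s}}{M}{A'}$. Note that, purely as formulas of the background theory,
\begin{align*}
	\bigvee_{\hat{s}' \in A} \predSat{\hat{s}'}^{\langle'\rangle} \; \Rightarrow \; \bigvee_{\hat{s}' \in A'} \predSat{\hat{s}'}^{\langle'\rangle},
\end{align*}
since every disjunct on the left appears on the right. Hence the existentially quantified body in the definition of $\validRes{\hat{s}}{M}{A'}$ is implied by the corresponding body of $\validRes{\hat{s}}{M}{A}$ (under the same existentially quantified assignment to $\{X_{\pi_i}'\}_{i=l+1}^k$), which in turn makes the full outer universal implication in the definition of $\validRes{\hat{s}}{M}{A'}$ follow from that of $\validRes{\hat{s}}{M}{A}$.

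There is no real obstacle here; the lemma is a direct monotonicity observation. If desired, one could phrase it more abstractly as: $\validRes{\hat{s}}{M}{A}$ is equivalent to the validity of a formula whose only dependence on $A$ is through the positive occurrence $\bigvee_{\hat{s}' \in A} \predSat{\hat{s}'}^{\langle'\rangle}$, and positive occurrences are monotone in set inclusion. This justification immediately gives upward closure under $\subseteq$.
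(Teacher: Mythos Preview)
Your proof is correct and matches the paper's treatment: the paper states the lemma as an ``easy observation'' without giving a proof, and your monotonicity argument via the positive occurrence of $A$ in $\bigvee_{\hat{s}' \in A} \predSat{\hat{s}'}^{\langle'\rangle}$ is exactly the intended justification.
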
 

Our initial abstraction consists of all possible restrictions (even those that might be invalid), i.e., we allow all restrictions of the form $(\hat{s}, M, A)$ where $A \subseteq \{\hat{s}' \mid \hat{s} \xrightarrow{M} \hat{s}'\}$.\footnote{Note that $\{\hat{s}' \mid \hat{s} \xrightarrow{M} \hat{s}'\}$ is always a valid restriction. Importantly, we can compute $ \{\hat{s}' \mid \hat{s} \xrightarrow{M} \hat{s}'\}$ locally, i.e., by iterating over abstract states opposed to \emph{sets} of abstract states.  }
This overapproximates the power of the safety player, i.e., a winning strategy for $\safe$ in $ \tilde{\mathcal{G}}$ may not be valid in $\gameEF{\STS}{\varphi}{\pred}$.
To remedy this, we propose the following inner refinement loop:
If we find a winning strategy $\sigma$ for $\safe$ in $\tilde{\mathcal{G}}$ we check if all restrictions chosen by $\sigma$ are valid.
If this is the case, $\sigma$ is also winning for $\gameEF{\STS}{\varphi}{\pred}$ and we can apply \Cref{theo:soundnessBeyond}.
If we find an invalid restriction $(\hat{s}, M, A)$ used by $\sigma$, we refine $\tilde{\mathcal{G}}$ by removing not only the restriction $(\hat{s}, M, A)$  but \emph{all} $(\hat{s}, M, A')$ with $A' \subseteq A$ (which is justified by \Cref{lem:closed}).
The algorithm is sketched in \Cref{alg:cap}.
The subroutine $\mathit{Restrictions}(\sigma)$ returns all restrictions used by $\sigma$, i.e., all tuples $(\hat{s}, M, A)$ such that $\sigma$ uses an edge $(\hat{s}, q, b) \rightsquigarrow (\hat{s}, q, b, M, A)$ for some $q, b$.
$\mathit{Remove}\linebreak[1](\tilde{\mathcal{G}},\linebreak[1] (\hat{s}, M, A'))$ removes from $\tilde{\mathcal{G}}$ all edges of the form $(\hat{s}, q, b) \rightsquigarrow (\hat{s}, q, b, M, A')$ for some $q, b$, and $\mathit{Solve}$ solves a finite-state safety game. 
To improve the algorithm further, in line 4 we always compute a maximal safety strategy, i.e., a strategy that selects maximal restrictions (w.r.t.~$\subseteq$) and therefore allows us to eliminate many invalid restrictions from $\tilde{\mathcal{G}}$ simultaneously.
For safety games, there always exists such a maximal winning strategy (see e.g.~\cite{BernetJW02}).
Note that while $\tilde{\mathcal{G}}$ is large, solving this finite-state game can be done very efficiently.
The running time of solving $\gameEF{\STS}{\varphi}{\pred}$ is dominated by the SMT queries of which our refinement loop, in practice, requires very few.

\section{Implementation and Evaluation}
\label{sec:eval}

\begin{wraptable}{R}{0.51\textwidth}
	\vspace{-3mm}
	\captionsetup{width=\linewidth-4mm}
	\caption{Evaluation of \texttt{HyPA} on $k$-safety instances. We give the size of the abstract game-space (Size), the time taken to compute the abstraction ($t_\mathit{abs}$), and the overall time taken by \texttt{HyPA} ($t$).
		Times are given in seconds.
	} \label{tab:ksafetyeval}
	
	\vspace{-1mm}
	\centering
	
	\small
	
	\def\arraystretch{1.3}
	\setlength\tabcolsep{1.7mm}
	\setlength\dashlinedash{1pt}
	\setlength\dashlinegap{2pt}
	\setlength\arrayrulewidth{0.7pt}
	\begin{tabular}{lccc}
		\toprule[1pt]
		\textbf{Instance} & \textbf{Size} & $\boldsymbol{t}_\mathit{abs}$ & $\boldsymbol{t}$  \\
		\cmidrule[1pt](r{1mm}){1-1}
		\cmidrule[1pt](l{1mm}r{1mm}){2-2}
		\cmidrule[1pt](l{1mm}r{1mm}){3-3}
		\cmidrule[1pt](l{1mm}r{0mm}){4-4}
		DoubleSquareNI & 819  & 92.3 & 92.8  \\
		\hdashline
		HalfSquareNI & 1166  & 85.9 & 86.5  \\
		\hdashline
		SquaresSum &  286 & 29.8  & 29.9 \\
		\hdashline
		ArrayInsert & 213 & 28.2 & 28.2 \\
		\hdashline
		Exp1x3 &  112 & 4.5  & 4.5 \\
		\hdashline
		Fig3& 268 & 11.9  & 12.0 \\
		\hdashline
		DoubleSquareNIff & 121 & 9.8 & 9.9 \\
		\hdashline
		\Cref{fig:univ} & 333 & 23.7 & 23.8 \\
		\hdashline
		ColIitem-Symm & 494 & 24.0 & 24.1 \\
		\hdashline
		Counter-Det & 216 & 10.2 & 10.3 \\
		\hdashline
		MultEquiv  & 757 & 18.9 & 19.0 \\
		\bottomrule[1pt]
	\end{tabular}
	\vspace{-3mm}
\end{wraptable}

When combining \Cref{theo:soundnessBeyond} and our iterative solver from \Cref{sec:solvingInner} we obtain an algorithm to verify $\forall^*\exists^*$-safety properties within a given abstraction. 
We have implemented a prototype of our method in a tool we call \texttt{HyPA}.
We use \texttt{Z3} \cite{MouraB08} to discharge SMT queries. 
The input of our tool is provided as an arbitrary STS in the SMTLIB format \cite{barrett2010smt}, making it \emph{language independent}.
In our programs, we make the program counter explicit, allowing us to track predicates locally \cite{HenzingerJMS02}.

\paragraph{Evaluation for $k$-Safety.}

As a special case of $\forall^*\exists^*$ properties, \texttt{HyPA} is also applicable to $k$-safety verification.
We collected an exemplifying suite of programs and $k$-safety properties from the literature \cite{ShemerGSV19,FarzanV19,UnnoTK21,SousaD16,UnnoTK21} and manually translated them into STS (this can be automated easily).
The results are given in \Cref{tab:ksafetyeval}.
As done by Shemer et al.~\cite{ShemerGSV19}, we already provide a set of predicates that is sufficient for \emph{some} reduction (but not necessarily the lockstep or sequential one), the search for which is then automated by \texttt{HyPA}.
Our results show the game-based search for a reduction can verify interesting $k$-safety properties from the literature.
We also note that, currently, the vast majority of time is spent on the construction of the abstract system.
If we would move to a fixed language, the computation time of the initial abstraction could be reduced by using existing (heavily optimized) abstraction tools \cite{HenzingerJMS02,ChakiCGJV04}.

\begin{table}[!t]
	\caption{Evaluation of \texttt{HyPA} on $\forall^*\exists^*$-safety verification instances. 
		We give the size and construction time of the initial abstraction (Size and $t_\mathit{abs}$).
		For both the direct (explicit) and lazy (\Cref{alg:cap}) solver we give the time to construct (and solve) the game ($t_\mathit{solve}$) and the overall time ($t = t_\mathit{abs} + t_\mathit{solve}$).
		For the lazy solver we, additionally, give the number of refinement iterations (\#Ref).
		Times are given in seconds. 
		TO indicates a timeout after 5 minutes.} \label{tab:beyondeval}
	
	\vspace{2mm}
	
	\centering
	
	\small
	
	\def\arraystretch{1.3}
	\setlength\tabcolsep{1.7mm}
	\setlength\dashlinedash{1pt}
	\setlength\dashlinegap{2pt}
	\setlength\arrayrulewidth{0.7pt}
	\begin{tabular}{lcc@{\hspace{7mm}}cc@{\hspace{7mm}}ccc}
		\toprule[1pt]
		&&&  \multicolumn{2}{@{}c@{\hspace{7mm}}}{\textbf{Direct}} &\multicolumn{3}{c}{\textbf{Lazy}} \\
		\cmidrule[1pt](l{-1mm}r{6mm}){4-5}
		\cmidrule[1pt](l{-1mm}r{0mm}){6-8}
		\textbf{Instance} & \textbf{Size} & $\boldsymbol{t}_\mathit{abs}$ & $\boldsymbol{t}_\mathit{solve}$ & $\boldsymbol{t}$ &  \textbf{\#Ref} & $\boldsymbol{t}_\mathit{solve}$ & $\boldsymbol{t}$  \\
		\cmidrule[1pt](r{1mm}){1-1}
		\cmidrule[1pt](l{0.5mm}r{1mm}){2-2}
		\cmidrule[1pt](l{0.5mm}r{6mm}){3-3}
		\cmidrule[1pt](l{-1mm}r{1mm}){4-4}
		\cmidrule[1pt](l{0.5mm}r{6mm}){5-5}
		\cmidrule[1pt](l{-1mm}r{1mm}){6-6}
		\cmidrule[1pt](l{0.5mm}r{1mm}){7-7}
		\cmidrule[1pt](l{0.5mm}r{0mm}){8-8}
		NonDetAdd &  4568 & 3.5 & TO & TO & 4  & 1.0  & 4.5  \\
		\hdashline
		CounterSum & 479  & 5.3 & 9.1 & 14.4 & 17  & 0.9 & 6.2  \\
		\hdashline
		AsynchGNI & 437 & 6.1  & 6.9 & 13.0 & 1 & 0.1 &  6.2 \\
		\hdashline
		CompilerOpt1 & 354  & 2.4 &  2.3 & 4.7  & 2  & 0.2  & 2.6 \\
		\hdashline
		CompilerOpt2 & 338  & 2.8 & 2.4 & 5.2 & 2 & 0.2 & 3.0  \\
		\hdashline
		Refine & 1357  & 6.1 & TO & TO & 4 & 0.7 & 6.8  \\
		\hdashline
		Refine2 & 1476  & 5.6 & TO & TO & 5 & 0.6 &  6.2 \\
		\hdashline
		Smaller& 327 & 2.3 & 4.0 & 6.3  & 11  & 0.4  & 2.7  \\
		\hdashline
		CounterDiff  & 959  & 8.5 & 18.3 & 26.8  & 19  & 1.1  & 9.6  \\
		\hdashline
		\Cref{fig:exist}  & 3180  & 11.1 & TO & TO & 22 & 2.9 & 14.0  \\
		\hdashline
		P1 (simple) & 83  & 2.0 & 1.4 & 3.4  & 1 & 0.1 & 2.1  \\
		\hdashline
		P1 (GNI) &  34793  & 17.0 & TO & TO & 72  & 95.7 & 112.7  \\
		\hdashline
		P2 (GNI)  & 15753  & 10.2 & TO & TO & 7 & 5.1 & 15.3  \\
		\hdashline
		P3 (GNI)  & 1429  & 6.6 &  20.9 & 27.5 &  7 & 0.6 & 7.2  \\
		\hdashline
		P4 (GNI)  & 7505  & 16.5 & TO & TO & 72 & 13.2 & 29.7  \\
		\bottomrule[1pt]
	\end{tabular}
\end{table}

\paragraph{Evaluation Beyond $k$-Safety.}

The main novelty of \texttt{HyPA} lies in its ability to, for the first time, verify temporal properties beyond $k$-safety.
As none of the existing tools can verify such properties, we compiled a collection of very small example programs and $\forall^*\exists^*$-safety properties.
Additionally, we modified the boolean programs from \cite{BeutnerF21} (where they checked GNI on boolean programs) by including data from infinite domains.
The properties we checked range from refinement properties for compiler optimizations, over general refinement of nondeterministic programs, to generalized non-interference.
Verification often requires a non-trivial combination of reduction and witness strategy (as the reduction must, e.g.,  compensate for branches of different lengths).
As before, we provide a set of predicates and let \texttt{HyPA} automatically search for a witness strategy with accompanying reduction. 
We list the results in \Cref{tab:beyondeval}.
To highlight the effectiveness of our inner refinement loop, we apply both a direct (explicit) construction of $\gameEF{\STS}{\varphi}{\pred}$ and the lazy (iterative) solver in \Cref{alg:cap}.
Our lazy solver (\Cref{alg:cap}) clearly outperforms an explicit construction and is often the only method to solve the game in reasonable time. 
In particular, we require very few refinement iterations and therefore also few expensive SMT queries.
Unsurprisingly, the problem of verifying properties beyond $k$-safety becomes much more challenging (compared to  $k$-safety verification) as it involves the \emph{synthesis} of a witness function which is already 2\texttt{EXPTIME}-hard for finite-state systems \cite{PnueliR89,CoenenFST19}. 
We emphasize that no other existing tool can verify any of the benchmarks. 

\section{Related Work}
\label{sec:relatedWork}

\paragraph{Asynchronous Hyperproperties.}

Recently, many logics for the formal specification of asynchronous hyperproperties have been developed \cite{BaumeisterCBFS21,GutsfeldMO21,BozzelliPS21,BeutnerF21}.
Our logic OHyperLTL is closely related to stuttering HyperLTL (HyperLTL$_S$) \cite{BozzelliPS21}.
In HyperLTL$_S$ each temporal operator is endowed with a set of temporal formulas $\Gamma$ and steps where the truth values of all formulas in $\Gamma$ remain unchanged are ignored during the operator's evaluation.
As for most mechanisms used to design asynchronous hyperlogics \cite{BaumeisterCBFS21,GutsfeldMO21,BozzelliPS21}, finite-state model checking of HyperLTL$_S$ is undecidable. 
By contrast, in OHyperLTL, we always observe the trace at a fixed location, which is key for ensuring decidable finite-state model checking.

\paragraph{$k$-Safety Verification.}

The literature on $k$-safety verification is rich.
Many approaches verify $k$-safety by using a form of self-composition \cite{BartheDR11,eilers2019modular,churchill2019semantic,FinkbeinerRS15} and often employ reductions to obtain compositions that are easier to verify.
Our game-based interpretation of a reduction (\Cref{sec:ksafety}) is related to Shemer et al.~\cite{ShemerGSV19}, who study $k$-safety verification within a given predicate abstraction using an enumeration-based solver (see \Cref{sec:ksafety} for a discussion). 
Farzan and Vandikas \cite{FarzanV19} present a counterexample-guided refinement loop that simultaneously searches for a reduction and a proof.
Sousa and Dillig \cite{SousaD16} facilitate reductions at the source-code level in program logic.

\paragraph{$\forall^*\exists^*$-Verification.}

Barthe et al.~\cite{BartheCK13} describe an asymmetric product of the system such that only a subset of the behavior of the second system is preserved, thereby allowing the verification of $\forall^*\exists^*$ properties. 
Constructing an asymmetric product and verifying its correctness (i.e., showing that the product preserves all behavior of the first, universally quantified, system) is challenging.  
Unno et al.~\cite{UnnoTK21} present a constraint-based approach to verify functional (opposed to temporal) $\forall\exists$ properties in infinite-state systems using an extension of constraint Horn clauses called pfwCHC.
The underlying verification approach is orthogonal to ours: pfwCHC allows for a clean separation of the actual verification and verification conditions, whereas our approach combines both. 
For example, our method can prove the existence of a witness strategy without ever formulating precise constraints on the strategy (which seems challenging).
Coenen et al.~\cite{CoenenFST19} introduce the game-based reading of existential quantification to verify temporal $\forall^*\exists^*$ properties in a synchronous and finite-state setting.
By contrast, our work constitutes the first verification method for temporal $\forall^*\exists^*$-safety properties in \emph{infinite-state} systems.
The key to our method is a careful integration of reductions which is not possible in a synchronous setting.
For finite-state systems (where the abstraction is precise) and synchronous specifications (where we observe every step), our method subsumes the one in \cite{CoenenFST19}.
Beutner and Finkbeiner \cite{BeutnerF22} use prophecy variables to ensure that the game-based reading of existential quantification is complete in a finite-state setting.
Automatically constructing prophecies for infinite-state systems is interesting future work. 
Pommellet and Touili \cite{PommelletT18} study the verification of HyperLTL in infinite-state systems arising from pushdown systems.
By contrast, we study verification in infinite-state systems that arise from the infinite variables domains used in software.

\paragraph{Game Solving.}

Our game-based interpretations are naturally related to infinite-state game solving \cite{FarzanK18,BaierCFFJS21,WalkerR14,BeyeneCPR14}.
State-of-the-art solvers for infinite-state games unroll the game \cite{FarzanK18}, use necessary subgoals to inductively split a game into subgames \cite{BaierCFFJS21}, encode the game as a constraint system \cite{BeyeneCPR14}, and iteratively refine the controllable predecessor operator \cite{WalkerR14}.
We tried to encode our verification approach directly as an infinite-state linear-arithmetic game.
However, existing solvers (which, notably, work \emph{without} a user-provided set of predicates) could not solve the resulting game \cite{FarzanK18,BaierCFFJS21}.
Our method for encoding the witness strategy using \emph{restrictions} corresponds to hyper-must edges in general abstract games \cite{AlfaroGJ04,AlfaroR07}.
Our inner refinement loop for solving a game with hyper-must edges without explicitly identifying all edges (\Cref{alg:cap}) is thus also applicable in general abstract games.

\section{Conclusion}

In this work, we have presented the first verification method for temporal hyperproperties beyond $k$-safety in infinite-state systems arising in software.
Our method is based on a game-based interpretation of reductions and existential quantification and allows for mutual dependence of both.
Interesting future directions include the integration of our method in a counter-example guided refinement loop that automatically refines the abstraction and ways to lift the current restriction to temporally safe specifications. 
Moreover, it is interesting to study if, and to what extent, the numerous other methods developed for $k$-safety verification of infinite-state systems (apart from reductions) are applicable to the vast landscape of hyperproperties that lies beyond $k$-safety.

\subsubsection{Acknowledgments}

This work was partially supported by the DFG in project 389792660 (Center for Perspicuous Systems, TRR 248).
R.~Beutner carried out this work as a member of the Saarbrücken Graduate School of Computer Science.

\bibliographystyle{splncs04}
\bibliography{references}

\begin{thebibliography}{10}
\providecommand{\url}[1]{\texttt{#1}}
\providecommand{\urlprefix}{URL }
\providecommand{\doi}[1]{https://doi.org/#1}

\bibitem{AbadiL91}
Abadi, M., Lamport, L.: The existence of refinement mappings. Theor. Comput.
  Sci.  \textbf{82}(2) (1991). \doi{10.1016/0304-3975(91)90224-P}

\bibitem{AlfaroGJ04}
de~Alfaro, L., Godefroid, P., Jagadeesan, R.: Three-valued abstractions of
  games: Uncertainty, but with precision. In: {IEEE} Symposium on Logic in
  Computer Science, {LICS} 2004. {IEEE} (2004). \doi{10.1109/LICS.2004.1319611}

\bibitem{AlfaroR07}
de~Alfaro, L., Roy, P.: Solving games via three-valued abstraction refinement.
  In: International Conference on Concurrency Theory, {CONCUR} 2007. {LNCS},
  vol.~4703. Springer (2007). \doi{10.1007/978-3-540-74407-8\_6}

\bibitem{BaierCFFJS21}
Baier, C., Coenen, N., Finkbeiner, B., Funke, F., Jantsch, S., Siber, J.:
  Causality-based game solving. In: International Conference on Computer Aided
  Verification, {CAV} 2021. {LNCS}, vol. 12759. Springer (2021).
  \doi{10.1007/978-3-030-81685-8\_42}

\bibitem{barrett2010smt}
Barrett, C., Stump, A., Tinelli, C., et~al.: The smt-lib standard: Version 2.0.
  In: International Workshop on Satisfiability Modulo Theories. vol.~13 (2010)

\bibitem{BarrettFGHPZ05}
Barrett, C.W., Fang, Y., Goldberg, B., Hu, Y., Pnueli, A., Zuck, L.D.: {TVOC:}
  {A} translation validator for optimizing compilers. In: International
  Conference on Computer Aided Verification, {CAV} 2005. {LNCS}, vol.~3576.
  Springer (2005). \doi{10.1007/11513988\_29}

\bibitem{BartheCK13}
Barthe, G., Crespo, J.M., Kunz, C.: Beyond 2-safety: Asymmetric product
  programs for relational program verification. In: International Symposium on
  Logical Foundations of Computer Science, {LFCS} 2013. vol.~7734. Springer
  (2013). \doi{10.1007/978-3-642-35722-0\_3}

\bibitem{BartheDR11}
Barthe, G., D'Argenio, P.R., Rezk, T.: Secure information flow by
  self-composition. Math. Struct. Comput. Sci.  \textbf{21}(6) (2011).
  \doi{10.1017/S0960129511000193}

\bibitem{BaumeisterCBFS21}
Baumeister, J., Coenen, N., Bonakdarpour, B., Finkbeiner, B., S{\'{a}}nchez,
  C.: A temporal logic for asynchronous hyperproperties. In: International
  Conference on Computer Aided Verification, {CAV} 2021. {LNCS}, vol. 12759.
  Springer (2021). \doi{10.1007/978-3-030-81685-8\_33}

\bibitem{Benton04}
Benton, N.: Simple relational correctness proofs for static analyses and
  program transformations. In: {ACM} Symposium on Principles of Programming
  Languages, {POPL} 2004. {ACM} (2004). \doi{10.1145/964001.964003}

\bibitem{BernetJW02}
Bernet, J., Janin, D., Walukiewicz, I.: Permissive strategies: from parity
  games to safety games. {RAIRO} Theor. Informatics Appl.  \textbf{36}(3)
  (2002). \doi{10.1051/ita:2002013}

\bibitem{BeutnerCFHK22}
Beutner, R., Carral, D., Finkbeiner, B., Hofmann, J., Krötzsch, M.: Deciding
  hyperproperties combined with functional specifications. In: Annual
  {ACM/IEEE} Symposium on Logic in Computer Science, {LICS} 2022. {ACM} (2022).
  \doi{10.1145/3531130.3533369}

\bibitem{BeutnerF21}
Beutner, R., Finkbeiner, B.: A temporal logic for strategic hyperproperties.
  In: International Conference on Concurrency Theory, {CONCUR} 2021. LIPIcs,
  vol.~203. Schloss Dagstuhl (2021). \doi{10.4230/LIPIcs.CONCUR.2021.24}

\bibitem{BeutnerF22}
Beutner, R., Finkbeiner, B.: Prophecy variables for hyperproperty verification.
  In: {IEEE} Computer Security Foundations Symposium, {CSF} 2022. {IEEE}
  (2022). \doi{10.1109/CSF54842.2022.00030}

\bibitem{BeyeneCPR14}
Beyene, T.A., Chaudhuri, S., Popeea, C., Rybalchenko, A.: A constraint-based
  approach to solving games on infinite graphs. In: Annual {ACM} Symposium on
  Principles of Programming Languages, {POPL} 2014. {ACM} (2014).
  \doi{10.1145/2535838.2535860}

\bibitem{BozzelliPS21}
Bozzelli, L., Peron, A., S{\'{a}}nchez, C.: Asynchronous extensions of
  {HyperLTL}. In: Annual {ACM/IEEE} Symposium on Logic in Computer Science,
  {LICS} 2021. {IEEE} (2021). \doi{10.1109/LICS52264.2021.9470583}

\bibitem{ChakiCGJV04}
Chaki, S., Clarke, E.M., Groce, A., Jha, S., Veith, H.: Modular verification of
  software components in {C}. {IEEE} Trans. Software Eng.  \textbf{30}(6)
  (2004). \doi{10.1109/TSE.2004.22}

\bibitem{ChaudhuriGL12}
Chaudhuri, S., Gulwani, S., Lublinerman, R.: Continuity and robustness of
  programs. Commun. {ACM}  \textbf{55}(8) (2012). \doi{10.1145/2240236.2240262}

\bibitem{churchill2019semantic}
Churchill, B.R., Padon, O., Sharma, R., Aiken, A.: Semantic program alignment
  for equivalence checking. In: {ACM} {SIGPLAN} Conference on Programming
  Language Design and Implementation, {PLDI} 2019. {ACM} (2019).
  \doi{10.1145/3314221.3314596}

\bibitem{ClarksonFKMRS14}
Clarkson, M.R., Finkbeiner, B., Koleini, M., Micinski, K.K., Rabe, M.N.,
  S{\'{a}}nchez, C.: Temporal logics for hyperproperties. In: International
  Conference on Principles of Security and Trust, {POST} 2014. {LNCS},
  vol.~8414. Springer (2014). \doi{10.1007/978-3-642-54792-8\_15}

\bibitem{ClarksonS08}
Clarkson, M.R., Schneider, F.B.: Hyperproperties. In: {IEEE} Computer Security
  Foundations Symposium, {CSF} 2008. {IEEE} (2008). \doi{10.1109/CSF.2008.7}

\bibitem{CoenenFST19}
Coenen, N., Finkbeiner, B., S{\'{a}}nchez, C., Tentrup, L.: Verifying
  hyperliveness. In: International Conference on Computer Aided Verification,
  {CAV} 2019. {LNCS}, vol. 11561. Springer (2019).
  \doi{10.1007/978-3-030-25540-4\_7}

\bibitem{DAntoniV17}
D'Antoni, L., Veanes, M.: The power of symbolic automata and transducers. In:
  International Conference on Computer Aided Verification, {CAV} 2017. {LNCS},
  vol. 10426. Springer (2017). \doi{10.1007/978-3-319-63387-9\_3}

\bibitem{eilers2019modular}
Eilers, M., M{\"{u}}ller, P., Hitz, S.: Modular product programs. {ACM} Trans.
  Program. Lang. Syst.  \textbf{42}(1) (2020). \doi{10.1145/3324783}

\bibitem{FarzanK18}
Farzan, A., Kincaid, Z.: Strategy synthesis for linear arithmetic games. Proc.
  {ACM} Program. Lang.  \textbf{2}({POPL}) (2018). \doi{10.1145/3158149}

\bibitem{FarzanV19}
Farzan, A., Vandikas, A.: Automated hypersafety verification. In: International
  Conference on Computer Aided Verification, {CAV} 2019. {LNCS}, vol. 11561.
  Springer (2019). \doi{10.1007/978-3-030-25540-4\_11}

\bibitem{FinkbeinerRS15}
Finkbeiner, B., Rabe, M.N., S{\'{a}}nchez, C.: Algorithms for model checking
  {HyperLTL} and {HyperCTL}{$^*$}. In: International Conference on Computer
  Aided Verification, {CAV} 2015. {LNCS}, vol.~9206. Springer (2015).
  \doi{10.1007/978-3-319-21690-4\_3}

\bibitem{GeYCH18}
Ge, Q., Yarom, Y., Cock, D., Heiser, G.: A survey of microarchitectural timing
  attacks and countermeasures on contemporary hardware. J. Cryptogr. Eng.
  \textbf{8}(1) (2018). \doi{10.1007/s13389-016-0141-6}

\bibitem{GrafS97}
Graf, S., Sa{\"{\i}}di, H.: Construction of abstract state graphs with {PVS}.
  In: International Conference on Computer Aided Verification, {CAV} 1997.
  {LNCS}, vol.~1254. Springer (1997). \doi{10.1007/3-540-63166-6\_10}

\bibitem{GutsfeldMO21}
Gutsfeld, J.O., M{\"{u}}ller{-}Olm, M., Ohrem, C.: Automata and fixpoints for
  asynchronous hyperproperties. Proc. {ACM} Program. Lang.  \textbf{5}({POPL})
  (2021). \doi{10.1145/3434319}

\bibitem{HenzingerJMS02}
Henzinger, T.A., Jhala, R., Majumdar, R., Sutre, G.: Lazy abstraction. In:
  {ACM} Symposium on Principles of Programming Languages, {POPL} 2002. {ACM}
  (2002). \doi{10.1145/503272.503279}

\bibitem{JhalaPR18}
Jhala, R., Podelski, A., Rybalchenko, A.: Predicate abstraction for program
  verification. In: Handbook of Model Checking. Springer (2018).
  \doi{10.1007/978-3-319-10575-8\_15}

\bibitem{Lipton75}
Lipton, R.J.: Reduction: {A} method of proving properties of parallel programs.
  Commun. {ACM}  \textbf{18}(12) (1975). \doi{10.1145/361227.361234}

\bibitem{McCullough88}
McCullough, D.: Noninterference and the composability of security properties.
  In: {IEEE} Symposium on Security and Privacy, {SP} 1988. {IEEE} (1988).
  \doi{10.1109/SECPRI.1988.8110}

\bibitem{MouraB08}
de~Moura, L.M., Bj{\o}rner, N.: {Z3:} an efficient {SMT} solver. In:
  International Conference on Tools and Algorithms for the Construction and
  Analysis of Systems, {TACAS} 2008. {LNCS}, vol.~4963. Springer (2008).
  \doi{10.1007/978-3-540-78800-3\_24}

\bibitem{PnueliR89}
Pnueli, A., Rosner, R.: On the synthesis of a reactive module. In: Annual {ACM}
  Symposium on Principles of Programming Languages, {POPL} 1989. {ACM} (1989).
  \doi{10.1145/75277.75293}

\bibitem{PommelletT18}
Pommellet, A., Touili, T.: Model-checking {HyperLTL} for pushdown systems. In:
  International Symposium on Model Checking Software, {SPIN} 2018. {LNCS}, vol.
  10869. Springer (2018). \doi{10.1007/978-3-319-94111-0\_8}

\bibitem{ShemerGSV19}
Shemer, R., Gurfinkel, A., Shoham, S., Vizel, Y.: Property directed self
  composition. In: International Conference on Computer Aided Verification,
  {CAV} 2019. {LNCS}, vol. 11561. Springer (2019).
  \doi{10.1007/978-3-030-25540-4\_9}

\bibitem{SousaD16}
Sousa, M., Dillig, I.: Cartesian hoare logic for verifying k-safety properties.
  In: {ACM} {SIGPLAN} Conference on Programming Language Design and
  Implementation, {PLDI} 2016. {ACM} (2016). \doi{10.1145/2908080.2908092}

\bibitem{UnnoTK21}
Unno, H., Terauchi, T., Koskinen, E.: Constraint-based relational verification.
  In: International Conference on Computer Aided Verification, {CAV} 2021.
  {LNCS}, vol. 12759. Springer (2021). \doi{10.1007/978-3-030-81685-8\_35}

\bibitem{WalkerR14}
Walker, A., Ryzhyk, L.: Predicate abstraction for reactive synthesis. In:
  Formal Methods in Computer-Aided Design, {FMCAD} 2014. {IEEE} (2014).
  \doi{10.1109/FMCAD.2014.6987617}

\end{thebibliography}

\iffullversion

\appendix

\newpage

\section{Proofs for \Cref{sec:ohyperltl}}\label{app:ohyperltl}

\finiteStateR*
\begin{proof}
	Let $X$ be the set of variables in $\calT$ and let $\mathfrak{D}$ be the \emph{finite} domain of the variables (for simplicity we assume that the domain of all variables is the same).
	An explicit state is then an assignment $\alpha : X \to \mathfrak{D}$. 
	Let $\mathfrak{S}$ be the (finite) set of all explicit states. 
	An (explicit) finite-state transition system is a tuple $\calT = (S, S_0,\rho)$ where $S \subseteq \mathfrak{S}$ is a set of explicit states, $S_0 \subseteq S$ is a set of initial states, and $\rho \subseteq S \times S$ is the transition relation.
	The set of traces $\traces{\STS}$ is defined as expected.
	
	Under the assumption on $\calT$, we can view it as a explicit (and computable) finite-state transition system.
	
	For any observation formula $\xi$ let $O(\xi) \subseteq \mathfrak{S}$ be the set of all states in which $\xi$ holds (which is computeable).
	
	Given $\calT = (S, S_0,\rho)$ and $\xi$ we construct an explicit finite-state transition system $\STS_\xi$ such that $\traces{\STS_\xi} = \{ \filter{t}{\xi} \mid t \in \traces{\STS} \land \filterdef{t}{\xi}\}$.
	We define $\STS_\xi \coloneqq (O(\xi) \cap S, S'_0,\rho')$ where
	\begin{align*}
		S_0' \coloneqq  \big\{s \in O(\xi) \cap S \mid \exists &s_0,s_1,\ldots, s_n \in S\ldot s_0 \in S_0 \land  s_n = s \, \land\\
		&\forall 0 \leq i \leq n - 1\ldot (s_i, s_{i+1}) \in \rho \, \land \\
		&\forall 0 \leq i \leq n - 1\ldot s_i \not\in O(\xi)\big\}.
	\end{align*}
	That is all states where $\xi$ holds that are reachable from some state in $S_0$ in $\calT$ without passing through another state where $\xi$ holds.
	Similarly we define
	\begin{align*}
		\rho' \coloneqq  \big\{(s, s') \in (O(\xi) \cap S )^2 \mid \exists &s_0,s_1,\ldots, s_n \in S\ldot n \geq 1 \land  s_0 = s \land s_n = s' \, \land\\
		&\forall 0 \leq i \leq n - 1\ldot (s_i, s_{i+1}) \in \rho \; \land \\
		&\forall 1 \leq i \leq n - 1\ldot s_i \not\in O(\xi)\big\}.
	\end{align*}
	I.e., there is an edge $(s, s')$ only if $s$ and $s'$ are connected by a path of length at least $1$ of unobserved states in $S$.

	As $\traces{\STS_\xi} = \{ \filter{t}{\xi} \mid t \in \traces{\STS} \land \filterdef{t}{\xi}\}$ we can reduce OHyperLTL model checking on $\STS$ to HyperLTL model checking on $\STS_\xi$, which is decidable \cite{FinkbeinerRS15}.
	Note that a OHyperLTL can use different observation formulas for different quantifiers. 
	In the resulting HyperLTL model checking instance we thus need to resolve different quantifiers on different systems (as, in general, $\STS_\xi \neq \STS_{\xi'}$ when $\xi \not\equiv \xi'$), which is easily done.
	\qed
\end{proof}

\newpage
\section{Proofs for \Cref{sec:ksafety}}\label{app:ksafety}

\soundnessksafety*
\begin{proof}
	Let $\STS = (X, \mathit{init}, \mathit{step})$ be the STS, let $\varphi = \forall \pi_1 : \xi_1. \ldots \forall \pi_k : \xi_k\ldot \phi$ be the OHyperLTL formula, and let $\calA_\phi$ be the deterministic safety automaton for $\phi$ used in the construction of $\gameE{\STS}{\varphi}{\pred}$.
	Assume that $\sigma$ is a winning strategy for $\safe$ in $\gameE{\STS}{\varphi}{\pred}$.
	We show that $\STS \models \varphi$.
	For this, let $t_1, \ldots, t_k \in \traces{\STS}$ be arbitrary traces such that $\filterdef{t_i}{\xi_i}$ for every $i$.
	We show that $[\pi_1 \mapsto \filter{t_1}{\xi_1}, \ldots, \pi_k \mapsto \filter{t_k}{\xi_k}] \models \phi$.
	
	The idea is to (implicitly) stutter traces $t_1, \ldots, t_k$ between two observation points and compute a pointwise abstraction for these stuttered traces.
	The stuttering is dictated by $\sigma$, i.e., we simulate prefixes in the game and query $\sigma$ to determine which scheduling to pick. All non-scheduled copies are stuttered.
	If $\sigma$ picks a scheduling $M$, the refuter can pick (in $\gameE{\STS}{\varphi}{\pred}$) an abstract successor state that is compatible with $M$.
	In our simulation we pick the exact abstract states that arises when moving as defined by traces $t_1, \ldots, t_k$, which, by definition of $M$-successor, is a valid step in the game.
	This simulation thus gives an abstraction of the stuttered traces which (as $\sigma$ is winning) avoids a visit to losing states in $\calA_\phi$.
	In the following we give a more detailed description of this high-level strategy.
	
	For $k$ assignment to $X$, $\mu_1, \ldots, \mu_k$ we write $\mu_1 \otimes \cdots \otimes \mu_k$ for the assignment over $\vec{X}$ defined by $(\mu_1 \otimes \cdots \otimes \mu_k)(x_{\pi_i}) \coloneqq \mu_i (x)$.
	For each assignment $\mu$ to $\vec{X}$ we define $\mathit{Abstract}(\mu)$ as the unique abstract state $\hat{s}$ such that $\mu \models \predSat{\hat{s}}$.
	We zip the traces $\filter{t_1}{\xi_1}, \ldots, \filter{t_k}{\xi_k}$ into a single trace $\overline{t}$ over assignments to $\vec{X}$ by defining $\overline{t}(j) \coloneqq \filter{t_1}{\xi_1}(j) \otimes \cdots \otimes \filter{t_k}{\xi_k}(j)$.
	To show that $[\pi_1 \mapsto \filter{t_1}{\xi_1}, \ldots, \pi_k \mapsto \filter{t_k}{\xi_k}] \models \phi$ it suffices to show that the unique run of $\calA_\phi$ on $\overline{t}$ does not visit a bad state.
	
	Consider the construction in \Cref{fig:construction1}.
	Note that this construction will never finish but allows us to point to key steps showing that  $[\pi_1 \mapsto \filter{t_1}{\xi_1}, \ldots, \pi_k \mapsto \filter{t_k}{\xi_k}] \models \phi$.
	 
	\floatstyle{plainruled}
	\restylefloat{algorithm}

	\begin{figure}[t]
		\begin{algorithm}[H]
			\begin{algorithmic}[1]
				\State $\mu_i \leftarrow t_i(0)$ for $1 \leq i \leq k$
				\State $c_i \leftarrow 0$ for $1 \leq i \leq k$
				\State $\hat{s} \leftarrow \mathit{Abstract}(\mu_1 \otimes \cdots \otimes \mu_k)$
				\State $q \leftarrow q_{\phi, 0}$
				\State $b \leftarrow \top^k$
				\While{true}
				\If{$b = \top^k \land \mathit{obs}(\hat{s}) = \top^k$ }
					\State $\mu'_i \leftarrow \mu_i$ for $1 \leq i \leq k$
					\State $c'_i \leftarrow c_i$ for $1 \leq i \leq k$
					\State $\hat{s}' \leftarrow \hat{s}$
					\State $q' \leftarrow \delta_\phi(q, \hat{s})$
					\State $b' \leftarrow \bot^k$
				\Else
					\State $(\_,\_,\_,M) \leftarrow \sigma(\hat{s}, q, b)$
					\State $c'_i \leftarrow \mathit{ite}(\pi_i \in M, c_i + 1, c_i)$ for $1 \leq i \leq k$
					\State $\mu'_i \leftarrow \mathit{ite}(\pi_i \in M, t_i(c_i), \mu_i)$ for $1 \leq i \leq k$
					\State $\hat{s}' \leftarrow \mathit{Abstract}(\mu'_1 \otimes \cdots \otimes \mu'_k)$
					\State $q' \leftarrow q$
					\State $b' \leftarrow b[i \mapsto \top]_{\pi_i \in M}$
				\EndIf
				\State $\mu_i \leftarrow \mu_i'$ for $1 \leq i \leq k$
				\State $c_i \leftarrow c_i'$ for $1 \leq i \leq k$
				\State $\hat{s}\leftarrow \hat{s}'$
				\State $q \leftarrow q'$
				\State $b \leftarrow b'$
				\EndWhile
			\end{algorithmic}
		\end{algorithm}
		\vspace{-7mm}
		\caption[]{Construction for the proof of \Cref{theo:soundessKSafety}.\label{fig:construction1}}
	\end{figure}
	
	We maintain a concrete state $\mu_i$ for each copy $1 \leq i \leq k$, initially set to the initial state according to the fixed traces (line 1).
	Additionally we maintain a counter $c_i$ that tracks at which position of $t_i$ the current state is located.
	It will always be the case that $\mu_i = t_i(c_i)$.
	
	The construction then simulates a play in $\gameE{\STS}{\varphi}{\pred}$ using the winning strategy $\sigma$ to resolve choices made by player $\safe$ as follows:
	The simulation starts in state $(\hat{s}, q, b)$ where $\hat{s}$ is the initial abstract state based on $t_1, \ldots, t_k$, $q$ the initial state of $\calA_\phi$, and $b = \top^k$ (lines 3 - 5).
	Note that, by construction, this is an initial game state in $\gameE{\STS}{\varphi}{\pred}$.
	
	The simulation then continuously advances the play. 
	If $b = \top^k$ and $\mathit{obs}(\hat{s}) = \top^k$ it applies transition rule \textbf{(2)} in lines 8-12. 
	In particular, the concrete states $\mu_i$, counters $c_i$ and abstract state $\hat{s}$ remain unchanged.
	Otherwise it queries $\sigma$ on the current state $(\hat{s}, q, b)$.
	By transition rule \textbf{(1)}, $\sigma$ can only select a scheduling $M$ (and not change the other state components).
	In line 14 we write $\_$ to mark that we do not care about a value.
	Each trace $t_i$ where $\pi_i \in M$ then takes a step, i.e., we increment $c_i$ for those copies, and we update the current state $\mu_i$ (lines 15 and 16).
	For non-scheduled copies $c_i$ is left unchanged and so is the concrete state $\mu_i$.
	We compute the new abstract state $\hat{s}'$ and update $b$ (as in transition rule \textbf{(3)}).
	
	We can establish a few basic properties:
	
	\begin{itemize}
	\item 	
		\textbf{P1:} \textit{Whenever the loop is entered, $\mu_1 \otimes \cdots \otimes \mu_k \models \predSat{\hat{s}}$.}
		
		This follows directly from the construction and the definition of $\mathit{Abstract}(\cdot)$.
		
	\item 
		\textbf{P2:} \textit{Whenever the loop is entered, $\mu_i = t_i(c_i)$ for all $1 \leq i \leq k$.}

		This follows directly from the construction and the updates performed in lines 15 and 16.
	
	\item 
		\textbf{P3:} \textit{Let $(\hat{s}, q, b)$ be the current state at the beginning of a loop body and $(\hat{s}', q', b')$ after the loop body has executed once.
					Then, in $\gameE{\STS}{\varphi}{\pred}$, player $\reach$ can force a play from $(\hat{s}, q, b)$ to  $(\hat{s}', q', b')$ when the safety player follows strategy $\sigma$.}
		
		In case the conditional in line 7 is taken this is trivial as it directly corresponds to transition rule \textbf{(2)} of the game (the only one that is applicable in that case).
		The more interesting direction is thus the case where lines 14-19 are executed. 
		So let $(\hat{s}, q, b)$ be a game state and let $\mu_1, \ldots, \mu_k$ be the current concrete states at the beginning of a loop iteration.
		By \textbf{P1} we get that $\mu_1 \otimes \cdots \otimes \mu_k \models \predSat{\hat{s}}$.
		As $(\hat{s}, q, b, M) = \sigma(\hat{s}, q, b)$, game state $(\hat{s}, q, b, M)$ is reachable under $\sigma$ (using transition rule \textbf{(1)}).
		By \textbf{P2} we have that $\mu_i = t_i(c_i)$ for each $i$.
		As each $t_i$ is a trace in $\STS$ we get that $t_i(c_i) \uplus (t_i(c_i + 1))' \models \mathit{step}$.
		Now $\mu_i' = \mu_i$ for all non-scheduled copies $\pi_i \not\in M$, and $\mu_i \uplus \mu_i' \models \mathit{step}$ for all $\pi_i \in M$.
		It is therefore easy to see that $\hat{s} \xrightarrow{M} \hat{s}'$ (follows directly from the definition of $\xrightarrow{M}$).
		The game state $(\hat{s}', q', b')$ reached after the loop is thus a successor state of $(\hat{s}, q, b, M)$ via transition rule \textbf{(3)}.
		So $(\hat{s}, q, b)$ can take a step to $(\hat{s}, q, b, M)$ that is fixed by $\sigma$ and from $(\hat{s}, q, b, M)$ player $\reach$ can move the game to $(\hat{s}', q', b')$ as required.
	\end{itemize}

	To show that $[\pi_1 \mapsto \filter{t_1}{\xi_1}, \ldots, \pi_k \mapsto \filter{t_k}{\xi_k}] \models \phi$ (or equivalently,  that the unique run of $\calA_\phi$ on $\overline{t}$ does not visit a bad state) consider the following:
	We focus on those iterations where the conditional on line 7 is taken. 
	We mark these with a superscript.
	That is $\mu_i^j$ is the value of $\mu_i$ when conditional in line 7 was taken for the $j$th time, and similarly $c_i^j$, $\hat{s}^j$, and $q^j$ (for the values of $c_i$, $\hat{s}$ and $q$, respectively).
	The important observation is that $c_i^j$ is now exactly the $j$th index where $t_i$ satisfies the observation formula $\xi_i$, i.e., $\filter{t_i}{\xi_i}(j) = t_i(c_i^j)$.
	This holds by the design of $\gameE{\STS}{\varphi}{\pred}$, i.e., player $\safe$ can only schedule copies that have not moved yet or have reached an observation point (via transition rule \textbf{(1)}).
	We thus get 
	\begin{align*}
		\overline{t}(j) = \filter{t_1}{\xi_1}(j) \otimes \cdots \otimes \filter{t_k}{\xi_k}(j) = t_1(c_1^j) \otimes \cdots \otimes t_k(c_k^j)
	\end{align*}
	By \textbf{P2},  $t_i(c_i^j) = \mu_i^j$ and by \textbf{P1} $\mu_1^j \otimes \cdots \otimes \mu_k^j \models \predSat{\hat{s}^j}$.
	So $\overline{t}(j) \models \hat{s}^j$ for every $j \in \nat$, i.e., the sequence $\hat{s}^0,\hat{s}^1,\ldots$ forms a pointwise abstraction of $\overline{t}$.
	Moreover $q^0,q^1,\ldots$ is the unique run of $\calA_\phi$ on $\hat{s}^0,\hat{s}^1,\ldots$.
	As $\sigma$ is winning and the construction simulates an actual game play allowed by $\sigma$  (as stated in \textbf{P3}), we get that all states $q^0,q^1,\ldots$ are not losing. 
	This already concludes that $\overline{t}$ is accepted by $\calA_\phi$ (does not have a rejecting run to a losing state), and so $[\pi_1 \mapsto \filter{t_1}{\xi_1}, \ldots, \pi_k \mapsto \filter{t_k}{\xi_k}] \models \phi$ as required.
	\qed
\end{proof}

\newpage

\section{Proofs for \Cref{sec:beyondksafety}}\label{app:beyond}

\soundessBeyond*
\begin{proof}
	Let $\STS = (X, \mathit{init}, \mathit{step})$ be the STS, let $\varphi = \forall \pi_1 : \xi_1. \ldots \forall \pi_l : \xi_l. \exists \pi_{l+1} : \xi_{l+1}. \cdots \exists \pi_k : \xi_k \ldot \phi$ be the OHyperLTL formula, and let $\calA_\phi$ be the deterministic safety automaton for $\phi$ used in the construction of $\gameEF{\STS}{\varphi}{\pred}$.
	Assume that $\sigma$ is a winning strategy for $\safe$ in $\gameEF{\STS}{\varphi}{\pred}$.
	We show that $\STS \models \varphi$.
	For this, let $t_1, \ldots, t_l \in \traces{\STS}$ be arbitrary traces such that $\filterdef{t_i}{\xi_i}$ for every $1 \leq i \leq l$.
	We will construct traces $t_{l+1}, \ldots, t_k \in \traces{\STS}$ such that $[\pi_1 \mapsto \filter{t_1}{\xi_1}, \ldots, \pi_k \mapsto \filter{t_k}{\xi_k}] \models \phi$. 

	To do so we follow the idea used in the proof of \Cref*{theo:soundessKSafety} and query $\sigma$ to select a scheduling. 
	In addition, we need to actually construct traces $t_{l+1}, \ldots, t_k$. The idea is to use the definition of $\validRes{\hat{s}}{M}{A}$:
	In each step, we plug the current concrete states in the universal quantifiers of $\validRes{\hat{s}}{M}{A}$ and obtain a concrete witness for the existentially quantified variables.
	These correspond exactly to the successor states for the traces $t_{l+1}, \ldots, t_k$.
	In the following we give a more detailed construction.

	For simplicity we assume that there is a unique concrete state $\mu_{\mathit{init}}$ that satisfies $\mu_{\mathit{init}} \models \mathit{init}$.
	Note that this implies that $t_i(0) = \mu_{\mathit{init}}$ for all $1 \leq i \leq l$.
	Let $\hat{s}_{\mathit{init}}$ be the resulting initial abstract state if all $k$ copies are in $\mu_{\mathit{init}}$.
	Consider the construction in \Cref{fig:construction2} (which, again, does never terminate but allows the construction of witness traces in the limit).

	\floatstyle{plainruled}
	\restylefloat{algorithm}

	\begin{figure}[t]
		\begin{algorithm}[H]
			\begin{algorithmic}[1]
				\State $\mu_i \leftarrow \mu_{\mathit{init}}$ for $1 \leq i \leq k$
				\State $c_i \leftarrow 0$ for $1 \leq i \leq l$
				\State $\hat{s} \leftarrow \hat{s}_{\mathit{init}}$
				\State $q \leftarrow q_{\phi, 0}$
				\State $b \leftarrow \top^k$
				\State $t_i = [\mu_{\mathit{init}}]$ for $l+1 \leq i \leq k$
				\While{true}
				\If{$b = \top^k \land \mathit{obs}(\hat{s}) = \top^k$ }
					\State $\mu'_i \leftarrow \mu_i$ for $1 \leq i \leq k$
					\State $c'_i \leftarrow c_i$ for $1 \leq i \leq k$
					\State $\hat{s}' \leftarrow \hat{s}$
					\State $q' \leftarrow \delta_\phi(q, \hat{s})$
					\State $b' \leftarrow \bot^k$
				\Else
					\State $(\_,\_,\_,M, A) \leftarrow \sigma(\hat{s}, q, b)$
					\State $c'_i \leftarrow \mathit{ite}(\pi_i \in M, c_i + 1, c_i)$ for $1 \leq i \leq l$
					\State $\mu'_i \leftarrow \mathit{ite}(\pi_i \in M, t_i(c_i), \mu_i)$ for $1 \leq i \leq l$
					\State $\mu'_{l+1}, \ldots, \mu'_k \leftarrow \mathit{extractModel}(\validRes{\hat{s}}{M}{A}, \{\mu_i\}_{i=1}^k, \{\mu'_i\}_{i=1}^l)$ 
					
					\State $\hat{s}' \leftarrow \mathit{Abstract}(\mu'_1 \otimes \cdots \otimes \mu'_k)$
					\State $q' \leftarrow q$
					\State $b' \leftarrow b[i \mapsto \top]_{\pi_i \in M}$
					\State $t_i = t_i + [\mu'_i]$ for $l+1 \leq i \leq k$ with $\pi_i \in M$
				\EndIf
				\State $\mu_i \leftarrow \mu_i'$ for $1 \leq i \leq k$
				\State $c_i \leftarrow c_i'$ for $1 \leq i \leq l$
				\State $\hat{s}\leftarrow \hat{s}'$
				\State $q \leftarrow q'$
				\State $b \leftarrow b'$
				\EndWhile
			\end{algorithmic}
		\end{algorithm}
		
		\vspace{-7mm}
		\caption[]{Construction for the proof of \Cref{theo:soundnessBeyond}.\label{fig:construction2}}
	\end{figure}

	The basic construction is similar to that in the proof of \Cref{theo:soundessKSafety}.
	We again maintain $\hat{s}, q$, and $b$ and simulate the game using $\sigma$ to resolve choices made by player $\safe$.
	Additionally, we maintain a trace $t_i$ for each $l+1 \leq i \leq k$, initially set to the length-$1$-trace consisting of $\mu_{\mathit{init}}$.
	
	In lines 9-13 we then perform transition step \textbf{(2)} whenever this is possible.
	If not, we query $\sigma$ to determine a scheduling $M$ and restriction $A$ (in line 15).
	We update the concrete state of universally quantified executions that are scheduled (line 16 and 17).
	So far, this is identical to the construction in the proof of \Cref{theo:soundessKSafety}.
	The crucial point is that we need to fix a next concrete state for the existentially quantified traces that are scheduled.
	Here we make use of the definition of $\validRes{\hat{s}}{M}{A}$. 
	Recall that $\validRes{\hat{s}}{M}{A}$ is defined as:
	\begin{align*}
        &\forall \{X_{\pi_i}\}_{i=1}^{k}. \forall \{X_{\pi_i}'\}_{i=1}^l. \; \predSat{\hat{s}} \land \bigwedge_{i=1}^l \mathit{ite}\Big(\pi_i \in M, \mathit{step}_{\langle \pi_i \rangle}, \bigwedge_{x \in X} x_{\pi_i}' = x_{\pi_i}\Big)\\[-0.3cm]
        &\quad\Rightarrow \exists \{X_{\pi_i}'\}_{i=l+1}^k. \bigwedge_{i=l+1}^k \mathit{ite}\Big(\pi_i \in M, \mathit{step}_{\langle \pi_i \rangle}, \bigwedge_{x \in X} x_{\pi_i}' = x_{\pi_i}\Big) \land \bigvee\limits_{\hat{s}' \in A} \predSat{\hat{s}'}^{\langle'\rangle}
	\end{align*}
	
	We now know that $\mu_1 \otimes \cdots \otimes \mu_k \models \predSat{\hat{s}}$ at all times (similar to \textbf{P1} in the proof of \Cref{theo:soundessKSafety}).
	Moreover when using assignments $\mu_1, \ldots, \mu_k$ for $\{X_{\pi_i}\}_{i=1}^{k}$ and $\mu'_1, \ldots, \mu'_l$ for $\{X_{\pi_i}'\}_{i=1}^l$ we get a satisfying model for 
	\begin{align*}
		\bigwedge_{i=1}^l \mathit{ite}\Big(\pi_i \in M, \mathit{step}_{\langle \pi_i \rangle}, \bigwedge_{x \in X} x_{\pi_i}' = x_{\pi_i}\Big),
	\end{align*}
	i.e., the premise in $\validRes{\hat{s}}{M}{A}$.
	This holds as $t_1, \ldots, t_l$ are actual traces and lines 16 and 17 only move those universally quantified copies that are actually scheduled (see the proof of \Cref{theo:soundessKSafety}).

	As $\validRes{\hat{s}}{M}{A}$ holds (by transition rule \textbf{(1)}) we thus get a concrete assignments $\mu'_{l+1}, \ldots, \mu'_k$ to $\{X_{\pi_i}'\}_{i=l+1}^k$ that satisfies 
	\begin{align*}
		\bigwedge_{i=l+1}^k \mathit{ite}\Big(\pi_i \in M, \mathit{step}_{\langle \pi_i \rangle}, \bigwedge_{x \in X} x_{\pi_i}' = x_{\pi_i}\Big) \land \bigvee\limits_{\hat{s}' \in A} \predSat{\hat{s}'}^{\langle'\rangle}
	\end{align*}
	i.e., the conclusion of $\validRes{\hat{s}}{M}{A}$ (together with $\mu'_1, \ldots, \mu'_l$).
	We extract these assignments in line 18.
	That is, we plug in the concrete assignments $\mu_1, \ldots, \mu_k$ and $\mu'_1, \ldots, \mu'_l$ and get assignments $\mu'_{l+1}, \ldots, \mu'_k$.
	For each copy $l+1 \leq i \leq k$ that was scheduled in $M$ we then update the existentially quantified traces with this newly obtained assignment (line 22).
	By definition of $\validRes{\hat{s}}{M}{A}$ this will construct traces $t_{l+1}, \ldots, t_k$ that are contained in $\traces{\STS}$.

	We claim that, in the limit, the traces $t_{l+1},\ldots, t_k$ constructed serve as witness traces for $t_1, \ldots, t_l$.
	The crucial point is that the construction essentially simulates a play in $\gameEF{\STS}{\varphi}{\pred}$ that is compatible with $\sigma$.
	In particular, note that, by construction of the assignments $\mu'_{l+1},\ldots, \mu'_k$ from $\validRes{\hat{s}}{M}{A}$ we get that $\hat{s}'$ (as computed in line 19) satisfies $\hat{s}' \in A$, i.e., the abstract state is contained in the restriction chosen by $A$.
	The game state $(\hat{s}', q', b')$ is thus a successor state of $(\hat{s}, q, b, M, A)$ by transition rule \textbf{(3)}.

	The proof that $[\pi_1 \mapsto \filter{t_1}{\xi_1}, \ldots, \pi_k \mapsto \filter{t_k}{\xi_k}] \models \phi$ is then analogous to the proof of \Cref{theo:soundessKSafety}.
	\qed
\end{proof}

\fi

\end{document}